\definecolor{Gray}{gray}{0.9}
\def\0{\mathbf{0}}
\def\lam{\lambda}
\def \< {\langle}
\def \> {\rangle}
\def\beqa{\begin{eqnarray}}
\def\eeqa{\end{eqnarray}}
\def\beqas{\begin{eqnarray*}}
\def\eeqas{\end{eqnarray*}}
\newtheorem{theorem}{Theorem}[section]
\newtheorem{lemma}[theorem]{Lemma}
\newtheorem{corollary}[theorem]{Corollary}
\newtheorem{remark}[theorem]{Remark}
\newtheorem{assumption}[theorem]{Assumption}
\numberwithin{equation}{section}
\newcommand{\hatd}[1]{{}}
\newcommand{\bd}{\begin{displaymath}}
\newcommand{\ed}{\end{displaymath}}
\newcommand{\be}{\begin{equation}}
\newcommand{\ee}{\end{equation}}
\newcommand{\bq}{\begin{eqnarray}}
\newcommand{\eq}{\end{eqnarray}}
\newcommand{\bn}{\begin{eqnarray*}}
\newcommand{\en}{\end{eqnarray*}}
\newcommand{\re}{\mathds{R}}
\def\wt{\widetilde}
\def\P{\mathbb{P}}
\def\E{{\mathbb{E}}}
\title{Optimal Signal-Adaptive Trading with Temporary and Transient Price Impact}
\author[1]{Eyal Neuman }
\author[2]{Moritz Vo\ss  }
\affil[1]{Department of Mathematics, Imperial College London}
\affil[2]{Department of Mathematics,
University of California, Los Angeles}
\begin{document}

 \vspace{-0.5cm}
\maketitle

\begin{abstract}
We study optimal liquidation in the presence of linear temporary and transient price impact along with taking into account a general price predicting finite-variation signal. We formulate this problem as minimization of a cost-risk functional over a class of absolutely continuous and signal-adaptive strategies. The stochastic control problem is solved by following a probabilistic and convex analytic approach. We show that the optimal trading strategy is given by a system of four coupled forward-backward SDEs, which can be solved explicitly. Our results reveal how the induced transient price distortion provides together with the predictive signal an additional predictor about future price changes. As a consequence, the optimal signal-adaptive trading rate trades off exploiting the predictive signal against incurring the transient displacement of the execution price from its unaffected level. This answers an open question from~\citet{Lehalle-Neum18} as we show how to derive  the unique optimal signal-adaptive liquidation strategy when price impact is not only temporary but also transient.  
\end{abstract} 


\begin{description}
\item[Mathematics Subject Classification (2010):] 93E20, 60H30, 91G80
\item[JEL Classification:] C02, C61, G11
\item[Keywords:] optimal portfolio liquidation, price impact,
  optimal stochastic control, predictive signals
\end{description}

\bigskip

\section{Introduction}

The trading costs of executing large orders on an electronic trading platform often arise from the notion of price impact. Price impact refers to the empirical fact that the execution of a large order affects the risky asset's price in an adverse manner leading to less favorable prices. This typically induces additional execution costs for the trader. As a result, a trader who wishes to minimize her trading costs due to price impact has to split her order into a sequence of smaller orders which are executed over a finite time horizon. At the same time, the trader also has an incentive to execute these split orders rapidly because she does not want to carry the risk of an adverse price move far away from her initial decision price. This trade-off between price impact and market risk is usually translated into a stochastic optimal control problem where the trader aims to minimize a risk-cost functional over a suitable class of execution strategies. The corresponding optimal order execution problem has been extensively studied in the literature and continues to be of ongoing interest in research and practice. We refer to the monographs~\cite{cartea15book}, \cite{Gueant:16}, \cite{citeulike:12047995}, as well as the survey papers~\cite{GatheralSchiedSurvey} and~\cite{GokayRochSoner} for a thorough account for the developed price impact models.



In practice, apart from focusing on the aforementioned trade-off between price impact and market risk, many traders and trading algorithms also strive for using short term price predictors in their dynamic order execution schedules. Most of such documented predictors relate to orderbook dynamics as discussed in~\cite{leh16moun, Lehalle-Neum18, citeulike:12820703, cont14}. An example of such price predicting indicator is the order book imbalance signal, measuring the imbalance of the current liquidity in the limit order book; see, e.g., Section 4 of \cite{Lehalle-Neum18} and references therein. Another signal which was studied in the literature in the context of optimal order execution is the order flow imbalance; we refer to~\cite{Car-Jiam-2016, cont14, ludk14, ludk17} and references therein. More examples of trading signals being used in practice can be found in a presentation by~\citet{almg-pres}.


Consequently, one of the main challenges in the area of optimal
trading with price impact deals with the question of how to
incorporate short term predictive signals into a stochastic control
framework of cost-risk minimization. Among the first to address this
issue were~\citet{Car-Jiam-2016} who showed how to account for a
Markovian signal in an optimal execution problem in the presence of
linear temporary and permanent price impact of~\citet{OPTEXECAC00}
type. Their framework was then further generalized
by~\citet{Lehalle-Neum18}, Section 3, and by~\citet{BMO:19} who also
allowed for non-Markovian finite variation signals; see
also~\citet{CasgrainJaimungal:19} for incorporating latent factors
into the modeling framework of~\cite{Car-Jiam-2016}.  Subsequently, a
Markovian signal and transient price impact for a general class of
impact decay kernels as proposed by~\citet{GSS} were first confronted
in~\citet{Lehalle-Neum18}, Section 2. In contrast to purely temporary
price impact, transient impact on execution prices persists and decays
over a certain period of time after each trade. As a consequence,
optimal trading strategies in the presence of purely transient price
impact are typically singular. Indeed, they tend to trigger a
displacement of the market price from its unaffected level via an
instantaneous non-infinitesimal block trade in order to systematically
exploit the successively decaying impact at a finite trading rate as,
e.g., illustrated by the explicit results
in~\citet{Ob-Wan2005}. Mathematically, this renders the analysis of
optimal signal-adaptive trading strategies with transient price impact
much more intricate. A remedy, employed by the authors
in~\cite{Lehalle-Neum18}, consists of confining to deterministic (or
static) strategies which only use information of the predictive signal
at initial time. The important question about existence and
characterization of an optimal signal-adaptive strategy with transient
price impact was left open. In fact, in~\citet{stat-adp18} it was
shown that a strategy which is updated by information from the signal
several times during the liquidation period can significantly improve
the trading performance compared to an optimal static strategy. A
partial solution to this problem was proposed by~\citet{LorenzSchied},
who considered an execution model with exponentially 
decaying transient price impact, but without including a risk-aversion
term in the cost functional. Under the assumption that the signal is
absolutely continuous with a square integrable derivative the
optimal adapted strategy was derived.  Moreover, since only transient
price impact was considered in the model of \cite{LorenzSchied}, the
optimal strategy is singular and involves block trades. We will show
in this paper that once assuming that the price impact is both
transient and temporary, we can omit these regularity assumptions on
the signal and obtain absolutely continuous optimal trading
strategies. In addition, we show that the influence of the
risk-aversion term in the cost functional changes drastically the
qualitative behaviour of the optimal trading speed.

The main result in the present paper gives an answer to the open
question from~\cite{Lehalle-Neum18}. Specifically, in order to
optimize trading costs in the presence of exponentially decaying
transient price impact as proposed by~\citet{Ob-Wan2005} over a
sensible set of strategies adapted to the signal's filtration, we
adopt the price impact model from~\citet{GARLEANU16}. We incorporate
into the trader's cost-risk functional besides a linear transient
price impact component \`a la Obizhaeva-Wang also a linear temporary
price impact component of Almgren-Chriss type. This unifying framework
with temporary and transient price impact quadratically smoothens the
problem and rules out singular optimizers by naturally constraining
strategies to be absolutely continuous. Moreover, it turns out that
the probabilistic and convex analytic calculus of variations approach
from~\citet{BankSonerVoss:17} can be brought to bear to compute
explicitly optimal signal-adaptive strategies, also in a setup which
allows for more general non-Markovian signals compared
to~\cite{Lehalle-Neum18, GARLEANU16, LorenzSchied}. Following the
analysis in~\cite{GARLEANU16}, the crucial idea is to introduce the
displacement of the execution price from its unaffected level due to
transient price impact as an additional state variable. Then, similar
to~\cite{BankSonerVoss:17} the optimal control is characterized by a
system of coupled linear forward-backward stochastic differential
equations (FBSDEs) which is augmented by a linear forward equation for
the transient price distortion as well as an associated adjoint linear
backward SDE. This linear system can be decoupled and solved in closed
form. Its solution provides an explicit description of the optimal
trading rate. 
It turns out that the transient price distortion provides together
with the predictive signal an additional predictor for future price
changes. Accordingly, the optimal trading rate compensates the
exploitation of the predictive signal with the incurred transient
price impact.

Our results in this paper improve the results of~\citet{Car-Jiam-2016}
and~\citet{BMO:19} as we additionally allow for transient price
impact. In our setting the optimal strategy depends on the entire
trading trajectory, unlike the Markovian optimal strategies in the
strictly instantaneous price impact case.  Our main results also
generalize the results
of~\citet{GraeweHorst:17},~\citet{ChenHorstTran:19}, \citet{GSS},
\citet{SchiedStrehleZhang} and \citet{Strehle:17}, who study optimal
liquidation with both temporary and transient price impact, but
without a predictive price signal. This class of problems typically
leads to deterministic optimal strategies. However, signal-adaptive
optimal execution schedules have major practical significance, as
described above. Finally, our paper is also related to a very recent
work by~\citet{Forde21}, where an optimal liquidation problem with
power-law transient price impact and Gaussian signals is studied.

Our findings also relate to a class of optimal portfolio choice
problems: see, e.g.,~\citet{GARLEANU16}, and~\citet{Ekren-Karbe19} for
a setup of a portfolio optimizing agent that tries to exploit
partially predictable returns while facing linear temporary and
transient price impact. Unlike in our optimal execution framework, the
trading time horizon in these optimal investment problems is infinite,
and the agent perpetually invests simultaneously in a few assets
having their own signals. The ansatz for the value function is
typically a second order polynomial, which makes the derivation of the
latter as well as the corresponding optimal strategy much easier than
in the parabolic case where the time horizon is finite.  Moreover,
both~\cite{GARLEANU16} and~\cite{Ekren-Karbe19} study a Markovian
setup via dynamic programming techniques: \cite{GARLEANU16} describes
the optimal trading rate where the predictable returns are driven by a
jump-diffusion process, an assumption which is not needed in the
present paper since we allow a general signal in our model;
\cite{Ekren-Karbe19} studies the case where the signal is a Markovian
diffusion process which interacts with the asset prices through their
drift vector and covariance matrix. They derive an asymptotic optimal
trading strategy in the case when both temporary and transient price
impact tend to zero. In contrast, we derive the optimal strategy not
under the restriction of vanishing price impact.
  

The rest of the paper is organized as follows. In Section \ref{sec:setup} we introduce our optimal execution problem with temporary and transient price impact and predictable finite-variation signal. Our main result, an explicit solution to our optimal stochastic control problem, is presented in Section~\ref{sec:main}. Section~\ref{sec:illustration} contains some illustrations. The technical proofs are deferred to Section~\ref{sec:proof}.

\section{Model setup and problem formulation} \label{sec:setup}

Motivated by~\citet{Lehalle-Neum18} we introduce in the following a variant of the optimal signal-adaptive trading problem with transient price impact which was studied in Section 2 therein. 

Let $T>0$ denote a finite deterministic time horizon and fix a filtered probability space $(\Omega, \mathcal F,(\mathcal F_t)_{0 \leq t\leq T}, \P  )$ satisfying the usual conditions of right continuity and completeness. The set $\mathcal H^2$ represents the class of all (special) semimartingales~$P=(P_t)_{0 \leq t\leq T}$ whose canonical decomposition $P = \bar M + A$ into a (local) martingale $\bar M=(\bar M_t)_{0 \leq t\leq T}$ and a predictable finite-variation process $A=(A_t)_{0 \leq t\leq T}$ satisfies
\be \label{ass:P} 
E \left[ \langle  \bar M  \rangle_T \right] + E\left[\left( \int_0^T |dA_s| \right)^2 \right] < \infty.  
\ee

We consider a trader with an initial position of $x>0$ shares in a risky asset. The number of shares the trader holds at time $t\in [0,T]$ is prescribed as 
\begin{equation} \label{def:X}
X^u_t \triangleq x-\int_0^tu_sds
\end{equation}
where $(u_s)_{s \in [0,T]}$ denotes her selling rate which she chooses from a set of strategies
\be \label{def:admissset} 
\mathcal A \triangleq \left\{ u \, : \, u \textrm{ progressively measurable s.t. } \mathbb E\left[ \int_0^Tu_s^2 ds \right] <\infty \right\}.
\ee
We assume that the trader's trading activity causes price impact on the risky asset's execution price in the sense that her orders are filled at prices
\be \label{def:S}
S_{t} \triangleq P_{t} - \lam u_t - \kappa Y^u_t \qquad (0 \leq t \leq T), 
\ee
where $P$ denotes some unaffected price process in $\mathcal H^2$ and
\be \label{def:Y} 
Y^u_t \triangleq e^{-\rho t} y + \gamma\int_0^t e^{-\rho (t-s)}  u_s ds \qquad (0 \leq t \leq T) 
\ee
with some $y > 0$. Specifically, motivated by \citet{GARLEANU16} the trader's trading not only instantaneously affects the execution price in~\eqref{def:S} in an adverse manner through linear temporary price impact $\lambda > 0$ \`a la~\citet{OPTEXECAC00}; it also induces a longer lasting price distortion $Y^u$ because of linear transient price impact $\kappa > 0$ and $\gamma > 0$ as proposed by~\citet{Ob-Wan2005}. We assume that the transient price impact, which starts from an initial value $y>0$, persists and  decays only gradually over time at some exponential resilience rate $\rho > 0$. Also note that the unaffected price process $P \in \mathcal H^2$ includes a general signal process $A$ which is observed by the trader.

We now suppose that the trader's optimal trading objective is to unwind her initial position $x>0$ in the presence of temporary and transient price impact, along with taking into account the asset's general price signal $A$, through maximizing the performance functional 
\begin{equation} \label{def:objective}
\begin{aligned}
    & J(u) \triangleq \mathbb{E} \Bigg[ \int_0^T (P_t - \kappa Y^u_t) u_t dt - \lambda \int_0^T u^2_t dt + X_T^u P_T \Bigg.  \\ & \Bigg. \hspace{60pt} -\phi \int_0^T (X_t^u)^2 dt - \varrho (X_T^u)^2 \Bigg]
\end{aligned}
\end{equation}
via her selling rate $u \in \mathcal A$. The interpretation is as follows (cf. also Remark~\ref{rem:problem}.1.) below). The first three terms in~\eqref{def:objective} represent the trader's terminal wealth; that is, her final cash position including the accrued trading costs which are induced by temporary and transient price impact as prescribed in~\eqref{def:S}, as well as her remaining final risky asset position's book value. The fourth and fifth terms in~\eqref{def:objective} implement a penalty $\phi > 0$ and $\varrho > 0$ on her running and terminal inventory, respectively. Also observe that $J(u) < \infty$ for any admissible strategy $u \in \mathcal A$.

Our main goal in this paper is to solve the corresponding optimal stochastic control problem
\begin{equation} \label{def:optimization}
J(u) \rightarrow    \max_{u \in \mathcal A}.
\end{equation}

\begin{remark} \label{rem:problem}
\begin{enumerate}
    \item In case of purely temporary price impact, i.e., $\kappa = 0$, the performance functional in~\eqref{def:objective} and the associated optimization problem in~\eqref{def:optimization} is quite standard in the literature on optimal trading and execution problems. It was first introduced by~\cite{AlmgrenSIFIN, Forsythetal} and then subsequently studied, e.g., in~\cite{SchiedFuel, AnkirchnerJeanblancKruse:14, GraeweHorstQiu:15, Car-Jiam-2016, Lehalle-Neum18, BMO:19}.
    
    \item Our problem formulation in~\eqref{def:objective} with temporary and transient price impact is very similar to the framework introduced in~\citet{GARLEANU16} which was then further analyzed by~\citet{Ekren-Karbe19}. In contrast to their setup, we focus on a finite time horizon $T < \infty$. We also allow for more general price signal processes $A$ and not only a linear factor process as in~\cite{GARLEANU16} or a Markovian diffusion-type process as in~\cite{Ekren-Karbe19}. Moreover, we obtain an explicit solution to our optimization problem in~\eqref{def:optimization}, akin to the results established by~\cite{GARLEANU16} in their simpler framework, and do not necessitate an asymptotic analysis as carried out in~\cite{Ekren-Karbe19}.
    
    \item As mentioned at the beginning of this section our framework
      presented above also aims at following up on the optimal
      signal-adaptive trading problem in the presence of transient
      price impact which was studied in Section 2
      of~\cite{Lehalle-Neum18}. Therein, the authors confine
      themselves to analyze only deterministic optimal
      strategies. Indeed, optimal strategies in a purely transient
      price impact setup are typically singular (cf., e.g.,
      \cite{Ob-Wan2005, GSS, LorenzSchied, BankVoss:19}) which renders
      the analysis for signal-adaptive strategies mathematically much
      more intricate. As a remedy, we adopt the approach
      from~\cite{GARLEANU16}. We incorporate into the cost functional
      in~\eqref{def:objective} in addition to the transient price
      impact also a temporary impact component via $\lambda > 0$ which
      rules out singular optimizers. This quadratically smoothens the
      problem and very naturally constrains strategies to be
      absolutely continuous. Also note that we do not require the
      signal processes $A$ to be an integrated Markov process as
      in~\cite{Lehalle-Neum18}.

    \item For $A\equiv 0$, that is, without price signal process, but
      with terminal liquidation constraint $X^u_T = 0$ $\P$-a.s. the
      above optimization problem in~\eqref{def:optimization} was
      studied in~\citet{GraeweHorst:17} allowing for stochastic
      resilience and temporary price impact processes
      $(\rho_t)_{0 \leq t \leq T}$ and
      $(\lambda_t)_{0 \leq t \leq T}$, respectively. For the
      corresponding explicitly available deterministic solution in the
      case of constant coefficients and $\phi =0$ we refer
      to~\cite{ChenHorstTran:19}. As, e.g, in~\cite{Car-Jiam-2016,
        Lehalle-Neum18, BMO:19}, we do not incorporate a terminal
      state constraint in our optimization problem
      in~\eqref{def:optimization}. Note, however, that the terminal
      penalty $\varrho > 0$ on the remaining risky asset position
      allows to virtually enforce a liquidation constraint by choosing
      a large value for~$\varrho$ (see also the illustrations in
      Section~\ref{sec:illustration} below).
\end{enumerate}
\end{remark}

\section{Main result} \label{sec:main}

Our main result is an explicit description of the optimal strategy for problem~\eqref{def:optimization}. To state our result it is convenient to introduce for
\be \label{def:A} 
L \triangleq \begin{pmatrix}
    0 & 0 & -1 & 0 \\ 0 & -\rho & \gamma & 0 \\ -\phi/\lambda & \kappa\rho/(2\lambda) & 0 & \rho/(2\lambda) \\ 0 & 0 & \kappa\gamma & \rho
    \end{pmatrix} \in \mathbb{R}^{4 \times 4} 
\ee
the functions $S(t) = (S_{ij}(t))_{1\leq i,j \leq 4}$ given by the matrix exponential
\be \label{def:matrixExp}
S(t) \triangleq e^{Lt} \qquad (t \geq 0). 
\ee
We further define $G(t) = (G_i(t))_{1 \leq i \leq 4}$ as   
\begin{equation} \label{def:G}
G(t) \triangleq \left( \varrho/\lambda, -\kappa/(2\lambda),-1,0 \right) S(t) \qquad (t\geq 0) 
\end{equation}
 and let 
\begin{equation}
\begin{aligned} \label{def:vs}
v_0(t) \triangleq \left( 1- \frac{G_4(t)}{G_3(t)} \frac{S_{4,3}(t)}{ S_{4,4}(t)} \right)^{-1},  \qquad
&& v_1(t) \triangleq & \;
          \frac{G_4(t)}{G_3(t)}\frac{S_{4,1}(t)}{S_{4,4}(t)}
          -\frac{G_1(t)}{G_3(t)}, \\
v_2(t) \triangleq 
          \frac{G_4(t)}{G_3(t)}\frac{S_{4,2}(t)}{S_{4,4}(t)}
         -\frac{G_2(t)}{G_3(t)},  \qquad
&& v_3(t) \triangleq & \; \frac{G_4(t)}{G_3(t)}
\end{aligned}
\end{equation} 
for all $t \in [0,\infty)$. The functions $(S_{4,j}(t))_{1\leq j \leq 4}$ and $(G_i(t))_{1 \leq i \leq 4}$ can be computed explicitly and are given in~\eqref{eq:S41}-\eqref{eq:S44} and~\eqref{eq:G1}-\eqref{eq:G4}, respectively, in Section~\ref{subsec:matrixexponential} below. Lemma~\ref{lemma-g-s} therein also shows that $v_1(\cdot), v_2(\cdot), v_3(\cdot)$ are well-defined. In addition, we make following assumption below (see also Remark~\ref{rem:main}.1.).  

\begin{assumption} \label{assump:main}
We assume that the set of parameters $\xi \triangleq (\lambda,\gamma, \kappa, \rho, \varrho, \phi, T) \in \re^7_+$ are chosen such that
\be \label{g-s-cond} 
 G_3(T-t) S_{4,4}(T-t) \neq G_4(T-t) S_{4,3}(T-t) \qquad (0 \leq t \leq T). 
 \ee
\end{assumption}

Finally, let $ \mathbb E_{t}$ denote the expectation conditioned on $\mathcal F_t$ for all $t\in [0,T]$. We are now ready to state our main theorem. 

\begin{theorem} \label{thm:main} 
Under Assumption~\ref{assump:main}, there exists a unique optimal strategy $\hat{u} \in \mathcal A$ to problem~\eqref{def:optimization}. It is given in linear feedback form via
\be \label{eq:opt_u} 
\begin{aligned}
\hat{u}_{t} = & \; v_0(T-t) \Bigg( v_1(T-t) X^{\hat{u}}_{t} + v_2(T-t) Y^{\hat{u}}_{t} \Bigg. \\
& + \frac{1}{2\lam} \Bigg. \bigg( v_3(T-t)\mathbb E_t\left[
   \int_t^T  \frac{S_{4,3}(T-s)}{S_{4,4}(T-t)}dA_s \right] - \mathbb E_{t}\left[
   \int_{t}^T\frac{G_3(T-s)}{G_{3}(T-t)}dA_{s} \right] \bigg) \Bigg)
\end{aligned}
\ee
for all $t \in (0,T)$.
\end{theorem}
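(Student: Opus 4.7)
The plan is to follow the probabilistic calculus-of-variations approach of~\citet{BankSonerVoss:17}: exploit the strict concavity of $J$ in $u$ (from the $-\lambda\int u^2 dt$ and the quadratic penalties) to reduce the problem to its first-order condition, reformulate this condition as a linear four-dimensional forward-backward SDE whose coefficient matrix is exactly $L$ in~\eqref{def:A}, and solve the FBSDE explicitly via the matrix exponential $S(t)=e^{Lt}$.

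The first step is to compute the G\^ateaux derivative $dJ(\hat u;\alpha)$ along a perturbation $\alpha\in\mathcal A$. An integration by parts on $X_T^u P_T$ absorbs the martingale component of $P$ and rewrites $J(u)$ with integrand $X_t^u dA_t$ in place of $u_t P_t dt$. Substituting $\delta X_t=-\int_0^t\alpha_s ds$ and $\delta Y_t=\gamma\int_0^t e^{-\rho(t-s)}\alpha_s ds$ and applying Fubini yields the pointwise first-order condition
\[
2\lambda \hat u_s = -\kappa Y_s^{\hat u} + \mathbb E_s\Big[-\int_s^T dA_t + 2\phi\int_s^T X_t^{\hat u} dt + 2\varrho X_T^{\hat u} - \kappa\gamma\int_s^T e^{-\rho(t-s)}\hat u_t\, dt\Big].
\]
Introducing adjoint processes $\Pi_s$ for the first three terms in the conditional expectation and $Q_s$ for the last, martingale representation shows that $\Pi$ and $Q$ satisfy linear BSDEs with terminal data $\Pi_T=2\varrho X_T^{\hat u}$ and $Q_T=0$. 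Differentiating $2\lambda\hat u_s=-\kappa Y_s^{\hat u}+\Pi_s-Q_s$ in $s$ and eliminating $\Pi_s$ via this relation shows that $\vec z_t\triangleq (X_t^{\hat u},Y_t^{\hat u},\hat u_t,-Q_t)^\top$ solves the linear FBSDE
\[
d\vec z_t = L\vec z_t\, dt + \vec c\, dA_t + d\vec M_t, \qquad \vec c=(0,0,\tfrac{1}{2\lambda},0)^\top,
\]
with $\vec M$ a vector martingale, forward data $z_1(0)=x,\ z_2(0)=y$, terminal data $z_4(T)=0$, and the algebraic terminal condition $G_0\vec z(T)=0$ with $G_0=(\varrho/\lambda,-\kappa/(2\lambda),-1,0)$ (which rewrites $\hat u_T=(2\varrho X_T^{\hat u}-\kappa Y_T^{\hat u})/(2\lambda)$); note that $G_0$ is exactly the row vector used to define $G$ in~\eqref{def:G}.

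To solve the FBSDE explicitly, I apply It\^o's formula to $S(-t)\vec z_t$: since $L$ commutes with its matrix exponential, one obtains
\[
\vec z_T = S(T-t)\vec z_t + \int_t^T S(T-s)\vec c\, dA_s + \int_t^T S(T-s)\, d\vec M_s,
\]
and taking $\mathbb E_t[\cdot]$ removes the last integral. Substituting into the two terminal constraints $e_4\vec z(T)=0$ and $G_0\vec z(T)=0$, and using $G_0S(T-t)=G(T-t)$ together with $e_4 S(T-s)\vec c=S_{4,3}(T-s)/(2\lambda)$ and $G_0 S(T-s)\vec c=G_3(T-s)/(2\lambda)$, yields two linear equations in the two unknowns $\hat u_t$ and $z_4(t)$ with deterministic coefficients. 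Assumption~\ref{assump:main} is precisely the nonvanishing of the $2\times 2$ determinant of this system on $[0,T]$; solving Cramer-style and identifying the resulting coefficients with $v_0,\dots,v_3$ from~\eqref{def:vs} reproduces the claimed formula~\eqref{eq:opt_u}. Admissibility of the candidate then follows from boundedness of $v_0,\dots,v_3$ on $[0,T]$, the $\mathcal H^2$ bound~\eqref{ass:P} and standard BDG-type estimates for the conditional-expectation martingales $\mathbb E_t[\int_t^T\cdot\, dA_s]$; uniqueness is immediate from the strict concavity of $J$ on $\mathcal A$.

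The main obstacle will be the last step: rigorously handling the martingale representation and the conditional expectations of stochastic integrals against $dA$ when $A$ is only a predictable finite-variation process (not absolutely continuous) and justifying the interchange of $\mathbb E_t$ with the matrix-valued integrals against $S(T-s)\vec c$. A second subtle point, already earmarked by the authors, is the analytical verification that all denominators $G_3(T-t)$, $S_{4,4}(T-t)$, and $1-G_4(T-t)S_{4,3}(T-t)/(G_3(T-t)S_{4,4}(T-t))$ appearing in $v_0,\dots,v_3$ are bounded away from zero on $[0,T]$ under Assumption~\ref{assump:main}, which is the content of the referenced Lemma on $G$ and $S$.
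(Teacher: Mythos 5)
Your proposal follows essentially the same route as the paper's proof: strict concavity reduces the problem to the vanishing of the G\^ateaux derivative, the resulting first-order condition is recast (via the adjoint process for the transient impact term, your $-Q$, the paper's $Z^{\hat u}$) as a four-dimensional linear FBSDE with coefficient matrix $L$, and the two terminal conditions combined with the matrix exponential and conditional expectations yield a $2\times 2$ linear system for $\hat u_t$ and the adjoint, whose solution is exactly~\eqref{eq:opt_u}. The only step you leave implicit that the paper spells out is closing the admissibility argument with Gronwall's lemma (since the feedback form of $\hat u$ involves $X^{\hat u}$ and $Y^{\hat u}$, which themselves depend on $\hat u$), but this is routine and your outline is otherwise faithful.
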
  

The proof of Theorem~\ref{thm:main} is deferred to
Section~\ref{subsec:proofmainthm}. We observe that the optimal trading rate $\hat{u}$ in~\eqref{eq:opt_u} is affine-linear in both the current inventory $X^{\hat{u}}$ as well as the current price distortion $Y^{\hat{u}}$. The affine part comes from the general predictive signal~$A$. In particular, it turns out that the transient price displacement from the unaffected level serves as an additional predictor for future price changes. As a consequence, the optimal trading rate trades off exploiting the predictive signal $A$ against incurring transient price distortion $Y^{\hat{u}}$. These findings generalize the observations made in~\citet{GARLEANU16} for optimal portfolio choice problems with infinite horizon in a Markovian setup. Put differently, the optimal stock holdings $X^{\hat{u}}$ prescribed by the optimal selling rate $\hat{u}$ in~\eqref{eq:opt_u} together with the optimally controlled price distortion $Y^{\hat{u}}$ in~\eqref{def:Y} solve a two-dimensional system of coupled linear (random) ordinary differential equations. Its solution can be computed numerically via the associated fundamental solution. Specifically, introducing the process
\begin{equation} \label{def:zeta}
    \hat{\zeta}_t \triangleq \frac{v_0(T-t)}{2\lam} \bigg( v_3(T-t)\mathbb E_t\left[
   \int_t^T  \frac{S_{4,3}(T-s)}{S_{4,4}(T-t)}dA_s \right] - \mathbb E_{t}\left[
   \int_{t}^T\frac{G_3(T-s)}{G_{3}(T-t)}dA_{s} \right] \bigg)
\end{equation}
as well as the matrix-valued function
\begin{align}
    B(t) \triangleq
    \begin{pmatrix}
    -v_0(T-t) v_1(T-t) &  -v_0(T-t) v_2(T-t) \\
    \gamma v_0(T-t) v_1(T-t) &  \gamma v_0(T-t) v_2(T-t) - \rho
    \end{pmatrix} \label{def:B}
\end{align}
for all $t \in [0,T]$, we obtain following


\begin{corollary} \label{cor:main}
Under Assumption~\ref{assump:main} let $\Phi(t) \in \mathbb R^{2\times 2}$ be the unique nonsingular fundamental solution to the matrix differential equation 
\begin{equation} \label{eq:ODEPhi}
     \Phi(0) = I, \quad \dot{\Phi}(t) = B(t) \Phi(t) \quad (0 \leq t \leq T)
\end{equation}
with identity matrix $I \in \mathbb R^{2\times 2}$ and $B$ as defined in~\eqref{def:B}. Then the optimal stock holdings~$X^{\hat{u}}$ and the corresponding optimally controlled price distortion $Y^{\hat{u}}$ of the optimal strategy $\hat{u}$ from Theorem~\ref{thm:main} are given by      
\begin{equation} \label{def:optXY}
\begin{pmatrix}
X^{\hat{u}}_t \\
Y^{\hat{u}}_t
\end{pmatrix}
= \Phi(t) \left(
\begin{pmatrix}
x \\
y
\end{pmatrix}
+ \int_0^t \hat{\zeta}_s \, \Phi^{-1}(s) \, b \, ds 
\right),
\end{equation}
where $b \triangleq (-1, \gamma)^\top \in \mathbb R^2$.
\end{corollary}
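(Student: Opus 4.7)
The plan is to substitute the feedback form of $\hat{u}$ from Theorem~\ref{thm:main} back into the controlled dynamics of $X^{\hat{u}}$ and $Y^{\hat{u}}$, observe that the pair $(X^{\hat{u}}, Y^{\hat{u}})$ then satisfies a pathwise linear inhomogeneous ODE of the form $\dot Z_t = B(t) Z_t + \hat{\zeta}_t b$ with $Z_0 = (x,y)^\top$, and finally read off~\eqref{def:optXY} from the classical variation-of-constants formula.

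Concretely, I would first differentiate the definitions~\eqref{def:X} and~\eqref{def:Y} to obtain $\dot X^{\hat{u}}_t = -\hat{u}_t$ and $\dot Y^{\hat{u}}_t = -\rho Y^{\hat{u}}_t + \gamma \hat{u}_t$, which hold in the absolutely continuous sense pathwise since $\hat{u} \in \mathcal A$ implies $\hat{u}(\omega,\cdot) \in L^2([0,T])$ for almost every $\omega$. Next, writing the feedback form~\eqref{eq:opt_u} compactly as $\hat{u}_t = v_0(T-t)[v_1(T-t) X^{\hat{u}}_t + v_2(T-t) Y^{\hat{u}}_t] + \hat{\zeta}_t$ with $\hat{\zeta}$ as in~\eqref{def:zeta}, substitution and a direct comparison of coefficients with~\eqref{def:B} yield $\dot Z_t = B(t) Z_t + \hat{\zeta}_t b$ with $Z_t = (X^{\hat{u}}_t, Y^{\hat{u}}_t)^\top$ and $b = (-1,\gamma)^\top$ (the first component gives the $-\hat{\zeta}_t$ forcing from $\dot X^{\hat{u}}$, while the second gives $+\gamma \hat{\zeta}_t$ together with the $(\gamma v_0 v_2 - \rho)$ term from combining $-\rho Y^{\hat{u}}$ with $\gamma v_0 v_2 Y^{\hat{u}}$). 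Applying Duhamel's formula pathwise with the fundamental solution $\Phi$ of $\dot\Phi = B(\cdot)\Phi$, $\Phi(0)=I$, then produces the stated representation.

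The remaining items to verify are regularity and well-posedness. Continuity of $B$ on $[0,T]$ follows from smoothness of the entries $S_{i,j}(\cdot)$ and $G_i(\cdot)$ (they are components of matrix exponentials and hence real-analytic) together with the non-vanishing of the denominator of $v_0$ guaranteed by Assumption~\ref{assump:main}; standard linear-ODE theory then produces a unique absolutely continuous and nonsingular fundamental solution $\Phi$ on $[0,T]$. Pathwise integrability of $t\mapsto \hat{\zeta}_t$ on $[0,T]$ is inherited from the $\mathcal H^2$-hypothesis on $P$ via~\eqref{ass:P} and conditional Jensen, which makes the Duhamel integral $\int_0^t \hat{\zeta}_s \Phi^{-1}(s) b\, ds$ well-defined for almost every $\omega$.

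I do not anticipate any substantive obstacle: all of the real work has already been done in Theorem~\ref{thm:main}, and the corollary amounts to recognizing the closed-loop state dynamics as a linear random ODE and writing down its Duhamel representation. The only point that requires minor care is keeping the signs and the factor $\gamma$ consistent when collecting $\dot X^{\hat{u}}$ and $\dot Y^{\hat{u}}$ into $B(t)$ and $b$.
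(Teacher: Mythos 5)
Your proposal is correct and follows essentially the same route as the paper: substitute the feedback form of $\hat{u}$ into the dynamics of $(X^{\hat{u}},Y^{\hat{u}})$ to recognize the closed-loop system as the linear random ODE $d\mathbb X^{\hat{u}}_t = B(t)\mathbb X^{\hat{u}}_t\,dt + \hat{\zeta}_t\,b\,dt$ with $\mathbb X^{\hat{u}}_0=(x,y)$, and then apply the variation-of-constants representation via the fundamental solution $\Phi$. The sign and coefficient bookkeeping you describe matches~\eqref{def:B} and $b=(-1,\gamma)^\top$ exactly, and your added regularity remarks are consistent with (indeed slightly more detailed than) the paper's appeal to standard linear-ODE theory.
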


Again, the proof of Corollary~\ref{cor:main} can be found in Section~\ref{subsec:proofmainthm}. 

\begin{remark} \label{rem:main}
\begin{enumerate}
\item Assumption~\ref{assump:main} merely ensures that $v_0(\cdot)$ in~\eqref{def:vs} is well-defined. In fact, showing that~\eqref{g-s-cond} holds for any values of parameters $\xi \in \re^7_+$ seems intractable. However, given a set of parameters $\xi$ and verifying that~\eqref{g-s-cond} is satisfied is an easy task by using the explicit formulas for $S_{4,3},S_{4,4},G_3,G_4$ in~\eqref{eq:S43}, \eqref{eq:S44}, \eqref{eq:G3}, \eqref{eq:G4}. We numerically checked this for all $\xi \in [0,100]^7$, which includes all reasonable values of parameters.  

\item The special case where $\kappa = 0$ in the performance functional in~\eqref{def:objective}, i.e., considering only temporary price impact, corresponds to~\citet{BMO:19}, and~\citet{Lehalle-Neum18}, Section~3. One can check with the explicit expressions from Section~\ref{subsec:matrixexponential} that our result in Theorem~\ref{thm:main} retrieves the optimal solution from~\cite{BMO:19}, Theorem 3.1, as well as, in a Markovian setting, from~\cite{Lehalle-Neum18}, Proposition~3.2, in the limiting case when $\kappa$ tends to zero.

\item Note that our optimal strategy in Theorem~\ref{thm:main} is adapted to the underlying filtration $(\mathcal F_t)_{0 \leq t\leq T}$ and hence steadily updates its information about the price signal process $A$. This is in stark  contrast to the signal-adaptive optimal trading framework with transient price impact studied in~\citet{Lehalle-Neum18}, Section 2, where strategies are confined to be static (i.e., deterministic), taking only the information of the price signal at initial time 0 into account. 
\end{enumerate}
\end{remark}

\section{Illustration} \label{sec:illustration}

Similar to \citet{Lehalle-Neum18} we will illustrate in this section our main result in the special case where the signal process $A$ is given by 
\be \label{a-i}
A_t = \int_0^t I_s ds \qquad (t \geq 0)  
\ee
with $I=(I_t)_{t \geq 0}$ following an autonomous Ornstein-Uhlenbeck process with dynamics
\be 
\begin{aligned} \label{I-OU} 
I_{0}&=\iota, \quad dI_{t} = -\beta I_{t}\, dt +\sigma \, dW_{t} \qquad (t\geq 0).
\end{aligned} 
\ee
Here, $W=(W_t)_{t \geq 0}$ denotes a standard Brownian motion which is defined on our underlying filtered probability space and $\beta, \sigma>0$ are some constants. Having at hand our general result from Theorem~\ref{thm:main} we immediately obtain following optimal trading strategy in this case.

\begin{corollary} \label{cor:OU} 
Assume that the signal process $A$ is given by~\eqref{a-i}. Then the unique optimal trading rate $\hat{u} \in \mathcal A$ from Theorem~\ref{thm:main} simplifies to
\begin{align}
\hat{u}_{t} = & \; v_0(T-t) \Bigg( v_1(T-t) X^{\hat{u}}_{t} + v_2(T-t) Y^{\hat{u}}_{t} \Bigg.  \label{u-sol2} \\
& + \frac{I_t}{2\lam} \Bigg. \bigg( v_3(T-t)  
   \int_t^T e^{-\beta(s-t)} \frac{S_{4,3}(T-s)}{S_{4,4}(T-t)} ds - \int_t^T e^{-\beta(s-t)}  \frac{G_3(T-s)}{G_{3}(T-t)}ds \bigg) \Bigg).  \nonumber 
\end{align}
for all $t \in (0,T)$.
\end{corollary}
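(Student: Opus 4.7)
The plan is to specialize the general formula~\eqref{eq:opt_u} from Theorem~\ref{thm:main} to the setting where $A_t = \int_0^t I_s\, ds$ with $I$ the Ornstein--Uhlenbeck process in~\eqref{I-OU}. The starting observation is that under~\eqref{a-i} the finite-variation signal~$A$ is absolutely continuous with density $dA_s = I_s\, ds$, so that the two conditional expectations appearing in~\eqref{eq:opt_u} reduce to conditional expectations of ordinary (random) Lebesgue integrals against the deterministic kernels $S_{4,3}(T-s)/S_{4,4}(T-t)$ and $G_3(T-s)/G_3(T-t)$.

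Next, I would swap conditional expectation and time integral via a conditional Fubini argument. The required integrability is routine: the functions $s \mapsto S_{4,3}(T-s), S_{4,4}(T-t), G_3(T-s), G_3(T-t)$ are continuous on $[0,T]$ (hence bounded) by their explicit formulas recorded in Section~\ref{subsec:matrixexponential}, and Assumption~\ref{assump:main} guarantees that the denominators $S_{4,4}(T-t)$ and $G_3(T-t)$ are nonzero; meanwhile the OU process $I$ has uniformly bounded second moment on $[0,T]$, so $\mathbb E\int_0^T |I_s|\, ds < \infty$. Thus
\begin{equation*}
\mathbb E_t\!\left[\int_t^T \frac{S_{4,3}(T-s)}{S_{4,4}(T-t)} dA_s\right]
= \int_t^T \frac{S_{4,3}(T-s)}{S_{4,4}(T-t)}\, \mathbb E_t[I_s]\, ds,
\end{equation*}
and analogously for the second conditional expectation.

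The final ingredient is the explicit conditional mean of the OU process: solving~\eqref{I-OU} yields
\begin{equation*}
\mathbb E_t[I_s] = I_t\, e^{-\beta(s-t)} \qquad (t \leq s \leq T),
\end{equation*}
which follows by applying It\^o's formula to $e^{\beta s} I_s$ and taking $\mathcal F_t$-conditional expectations. Substituting this into the two integrals above factors out $I_t$ and reduces each conditional expectation to a deterministic time integral. Plugging the resulting expressions back into~\eqref{eq:opt_u} produces exactly~\eqref{u-sol2}.

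There is no genuine obstacle here; the argument is essentially a direct substitution, and the only mildly delicate point is the Fubini/integrability justification, which is handled by the continuity of the deterministic kernels together with the finite second moment of the OU process.
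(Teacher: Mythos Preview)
Your proposal is correct and is exactly the natural argument the paper has in mind; the paper itself does not give a separate proof but presents the corollary as an immediate consequence of Theorem~\ref{thm:main}, which is precisely what your substitution $dA_s = I_s\,ds$, conditional Fubini, and the OU conditional mean $\mathbb E_t[I_s]=I_t e^{-\beta(s-t)}$ accomplish. One minor remark: the non-vanishing of $S_{4,4}(T-t)$ and $G_3(T-t)$ is established in Lemma~\ref{lemma-g-s}~(ii)--(iii) rather than in Assumption~\ref{assump:main}, but this does not affect your argument.
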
  

\begin{remark}
Observe that the optimal trading rate in~\eqref{u-sol2} is signal-adaptive, i.e., adapted to the filtration generated by $I$, in contrast to the optimal solution presented in Section 2.3 of~\cite{Lehalle-Neum18}.   
\end{remark}

In Figures~\ref{fig:example1} to~\ref{fig:example3} we plot the
signal-adaptive optimal liquidation inventory
$\hat{X} \triangleq (X^{\hat{u}}_t)_{0 \leq t \leq T}$ with initial
position $x=10$ along with the corresponding optimal selling rate
$(\hat{u})_{0 \leq t \leq T}$ and optimally controlled price
distortion $\hat{Y} \triangleq (Y^{\hat{u}}_t)_{0 \leq t \leq T}$ with
$y=0$ obtained from Corollary~\ref{cor:OU} (by using also
Corollary~\ref{cor:main}) for three different realisations of the
signal process $(A_t)_{0 \leq t \leq T}$ in~\eqref{a-i}. The trader's
planning horizon is $T=10$. As for the model parameters, we fix the
values
\begin{equation} \label{model-par}
    \kappa = 1 ,\quad \gamma = 1,\quad \rho = 1, \quad \lambda = 0.5, \quad \phi = 0.1, \quad \varrho = 10,
\end{equation}
as well as
\begin{equation} \label{signal-par}
    \iota = 1, \quad \beta = 0.1,\quad \sigma = 0.5, 
\end{equation}
similar to the parameters in~\cite{Lehalle-Neum18} (cf. also the
empirical analysis in Section~4.2 therein). As mentioned in
Remark~\ref{rem:main}.1. above, one can easily check that
Assumption~\ref{assump:main} holds true with this set of parameter
values. We also compare graphically our signal-adaptive optimal
liquidation strategy $\hat{X}$ with (i) the inventory
$\tilde{X} \triangleq (\tilde{X}_t)_{0 \leq t \leq T}$ and
corresponding induced price distortion trajectory
$\tilde{Y} \triangleq (\tilde{Y}_t)_{0 \leq t \leq T}$ which ignores
the price signal, i.e., $I \equiv 0$ in~\eqref{u-sol2}; (ii) the
optimal signal-adaptive inventory trajectory
$\bar{X} \triangleq (\bar{X}_t)_{0 \leq t \leq T}$ for the purely
temporary price impact case from Theorem 3.1 in~\cite{BMO:19}. Note
that the former is simply the optimal strategy for the maximization
problem in~\eqref{def:optimization} where the trader presumes that the
unaffected price process $P$ has no signal, and the latter corresponds
to the optimal strategy where $\kappa = 0$ in \eqref{def:objective},
i.e., the trader ignores transient price distortion.

\begin{figure}[htbp!]
  \begin{center}
    \includegraphics[scale=.5]{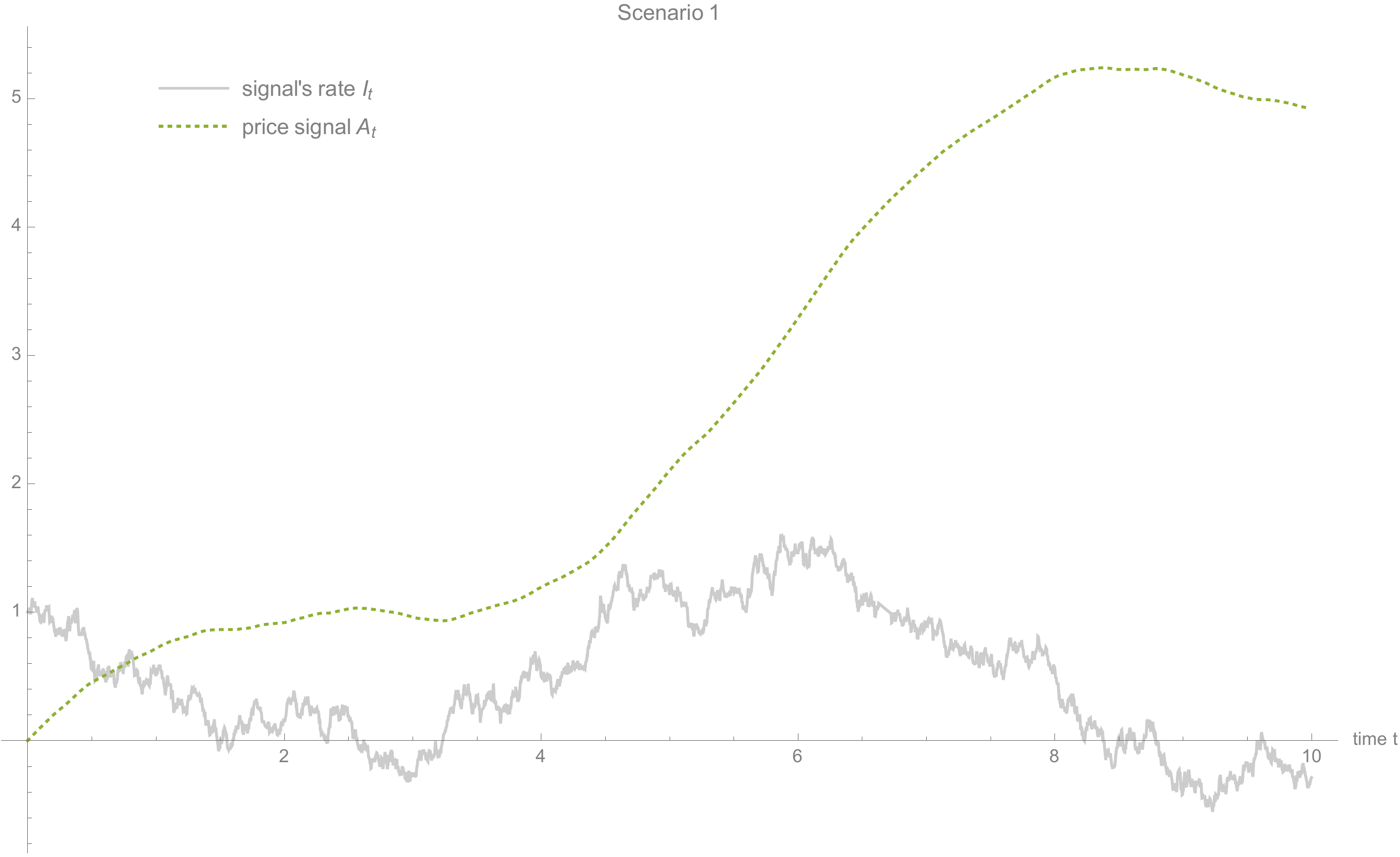}
    \par \vspace{2em}
    \includegraphics[scale=.5]{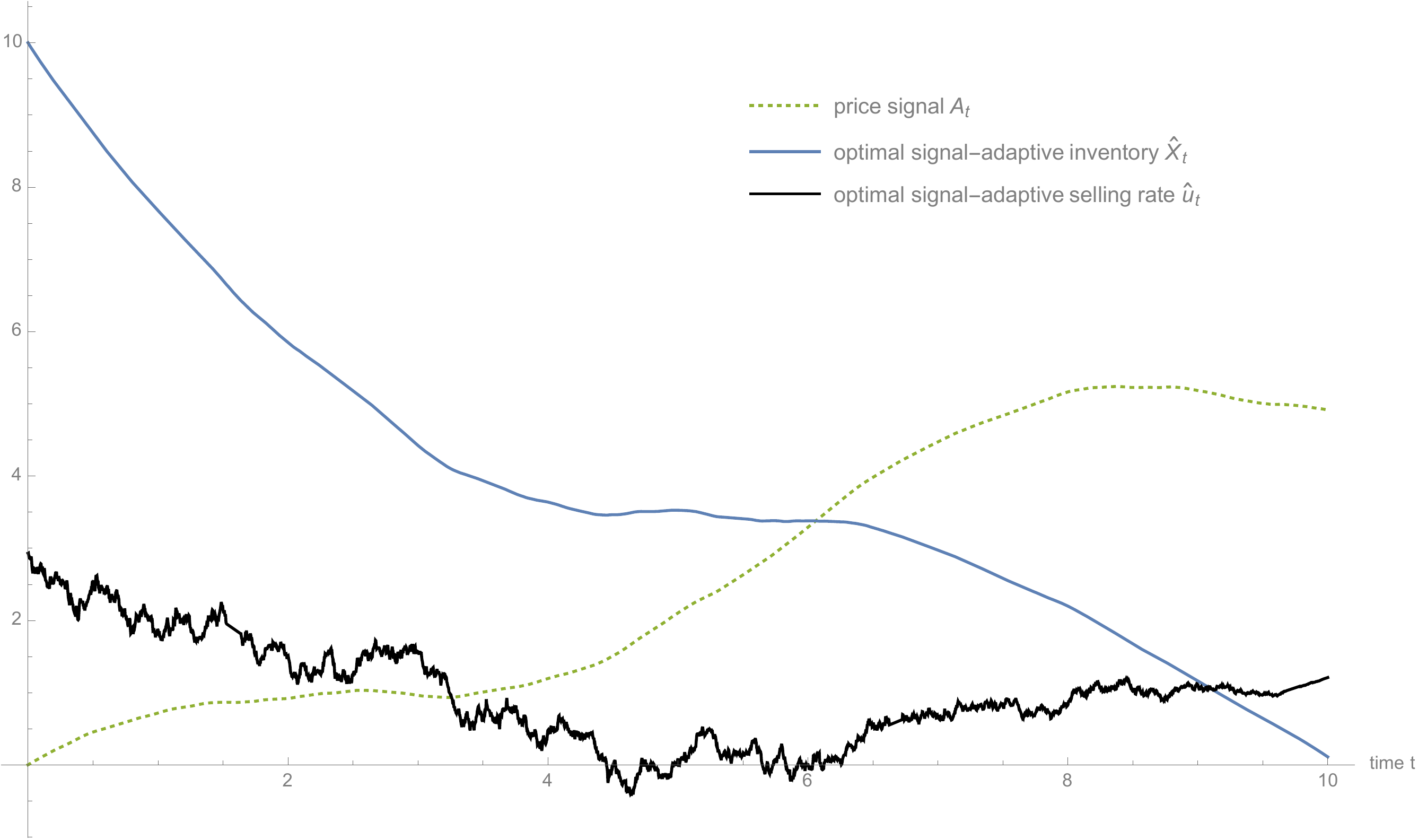}
    \caption{Upper panel: Realization of the signal rate $I_t$ (solid
      grey) and corresponding signal process $A_t$ (dashed
      green). Lower panel: Optimal signal-adaptive inventory (solid
      blue) and corresponding selling rate (solid black) for the same
      signal process (dashed green) as in the upper panel.}
    \label{fig:example1}
  \end{center}
\end{figure}

\begin{figure}[htbp!] 
  \begin{center}
    \includegraphics[scale=.5]{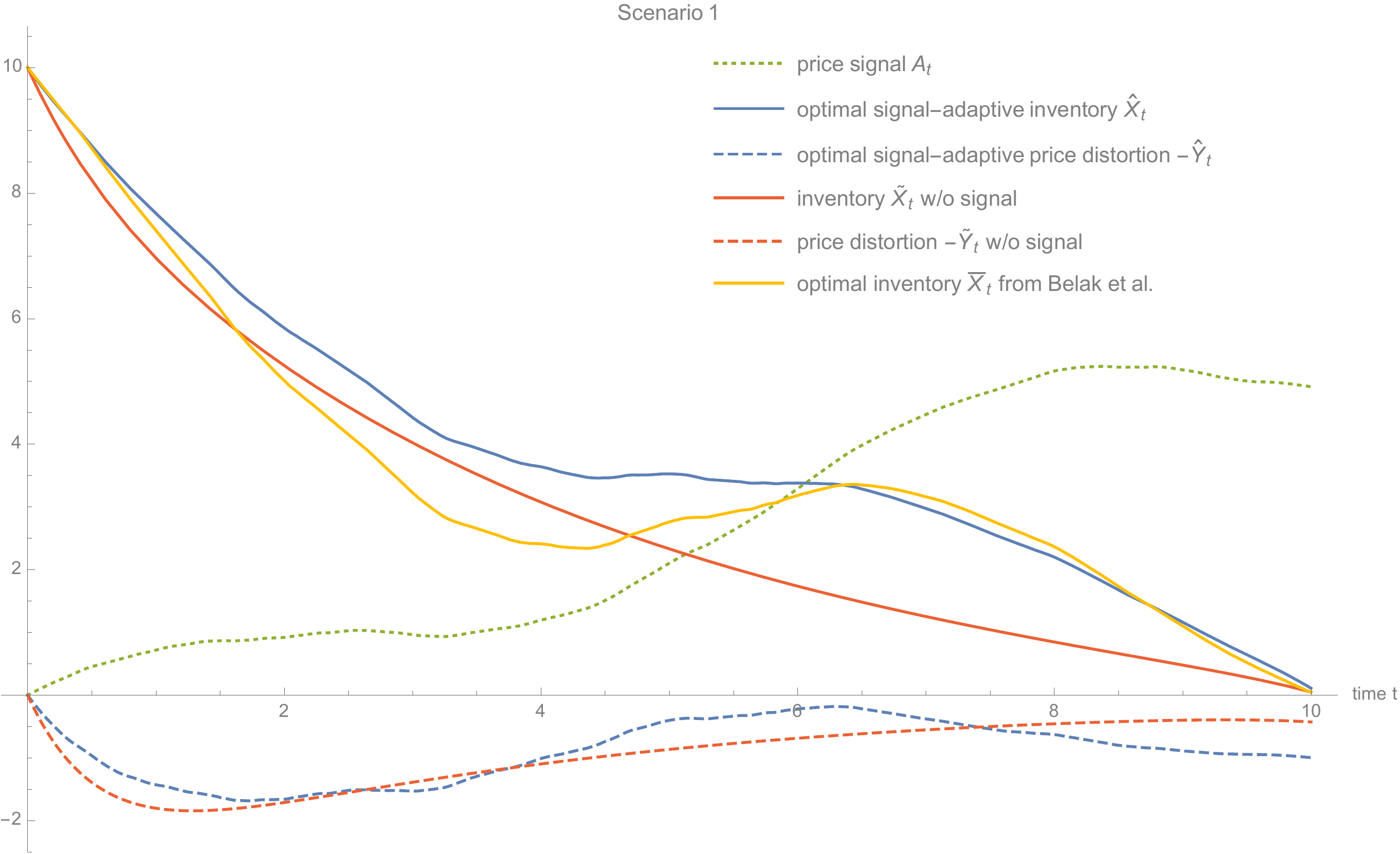}
    \caption{Comparison between the optimal signal-adaptive inventory
      (solid blue) and corresponding price distortion (dashed blue)
      with the optimal inventory (solid red) and corresponding price
      distortion (dashed red) ignoring the signal, as well as with the
      optimal signal-adaptive inventory with purely temporary price
      impact (solid yellow) for the same signal process (dashed green)
      as in Figure 1.}
    \label{fig:example12}
  \end{center}
\end{figure}

\begin{figure}[htbp!] 
  \begin{center}
    \includegraphics[scale=.50]{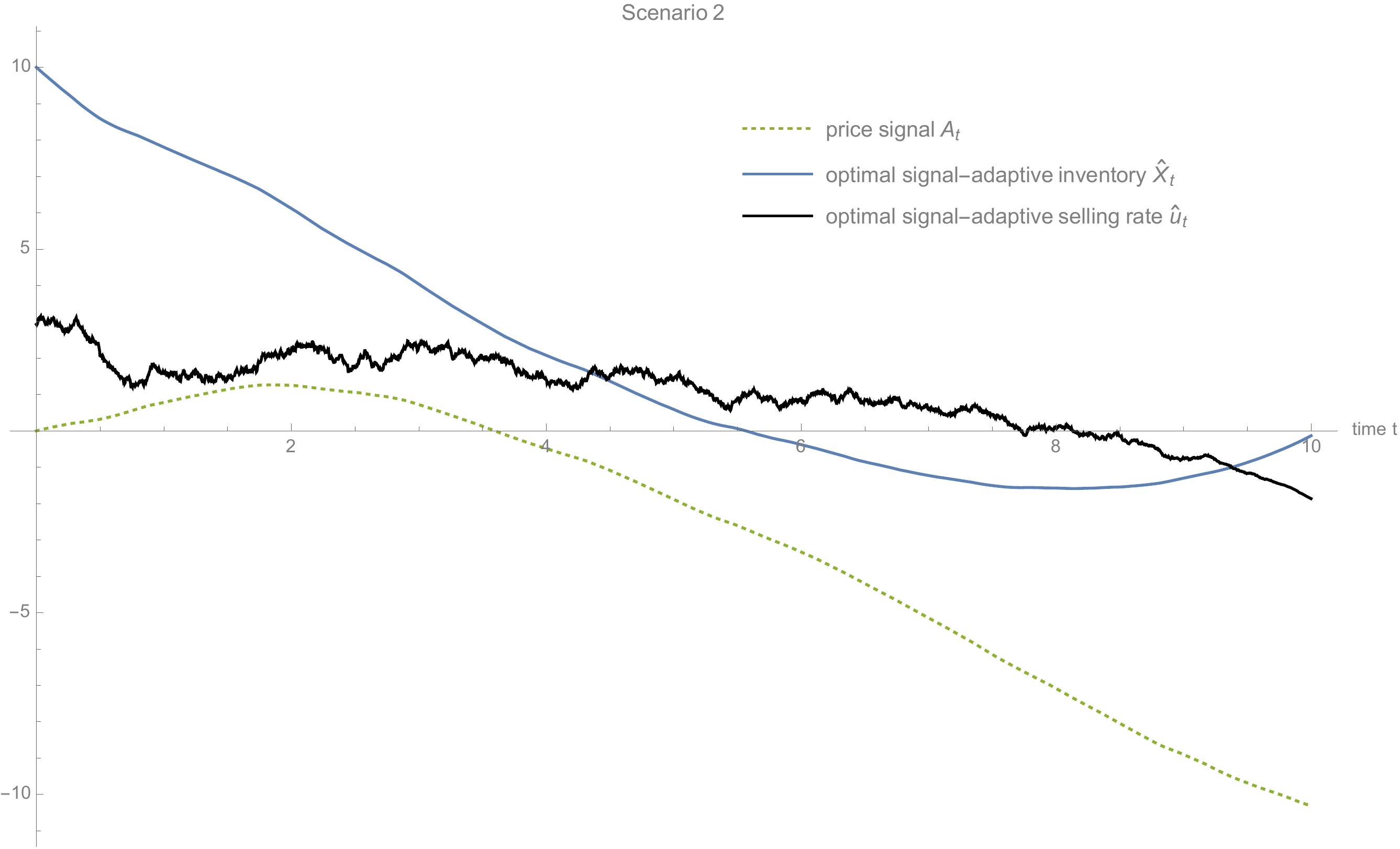}
    \par \vspace{2em}
    \includegraphics[scale=.50]{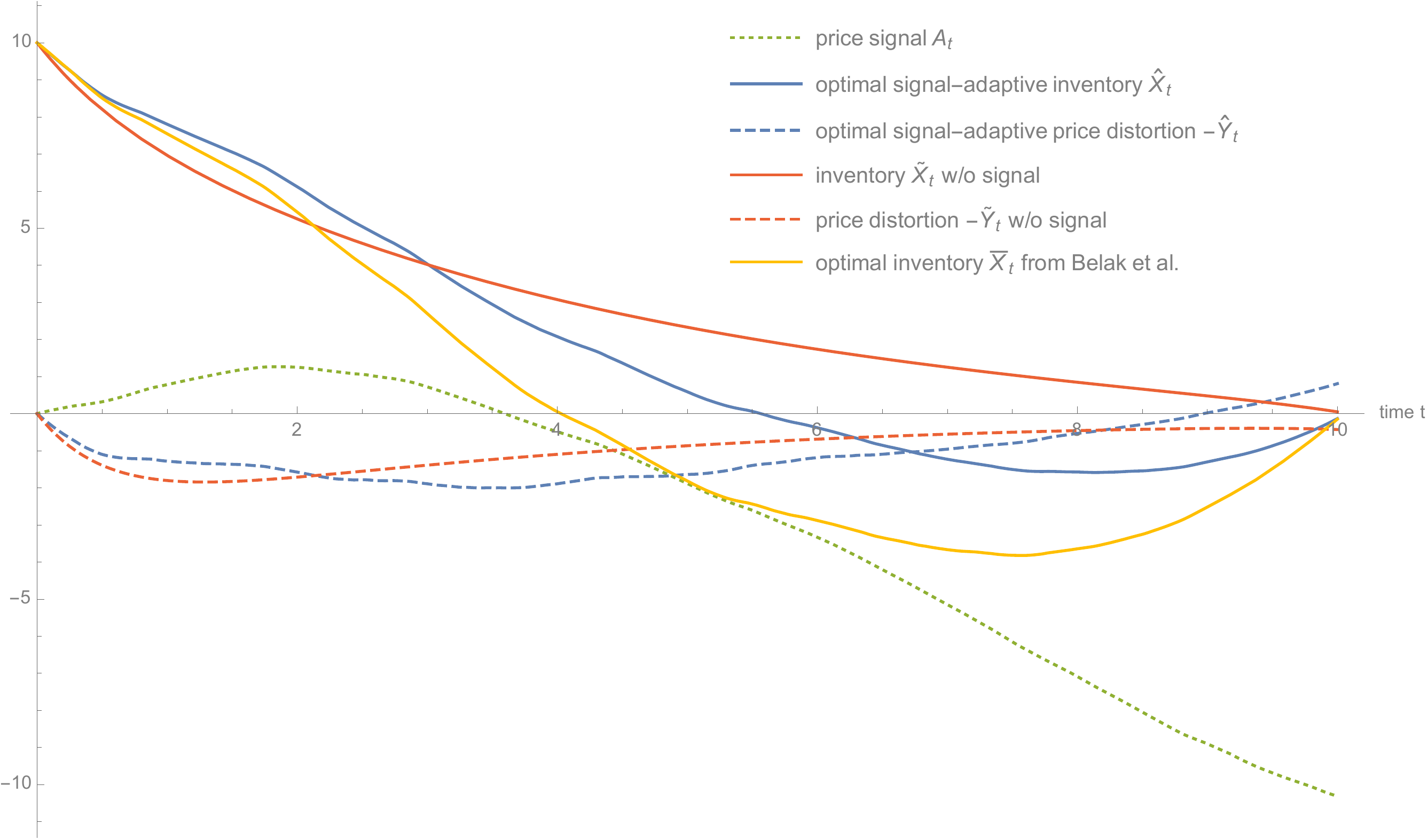}
    \caption{Upper panel: Similar to the lower panel of
      Figure~\ref{fig:example1}, optimal signal-adaptive inventory
      (solid blue) and corresponding selling rate (solid black) for a
      strongly decreasing price signal process (dashed green). Lower
      panel: Similar to Figure~\ref{fig:example12},
      comparison of the different optimal inventory and corresponding
      price distortion trajectories for the same signal process
      (dashed green) as in the upper panel.}
    \label{fig:example2}
  \end{center}
\end{figure}

\begin{figure}[htbp!] 
  \begin{center}
    \includegraphics[scale=.50]{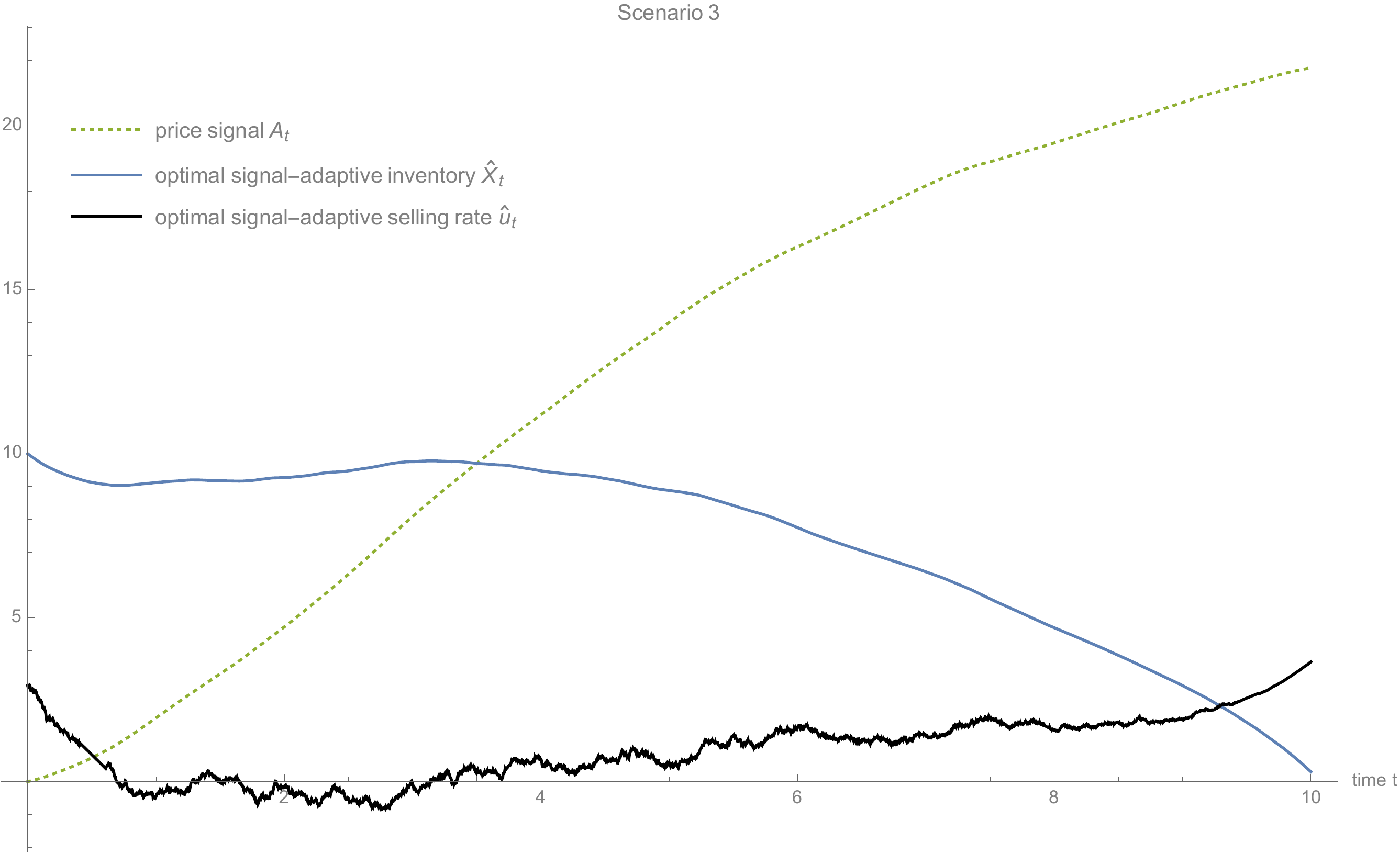}
    \par \vspace{2em}
    \includegraphics[scale=.50]{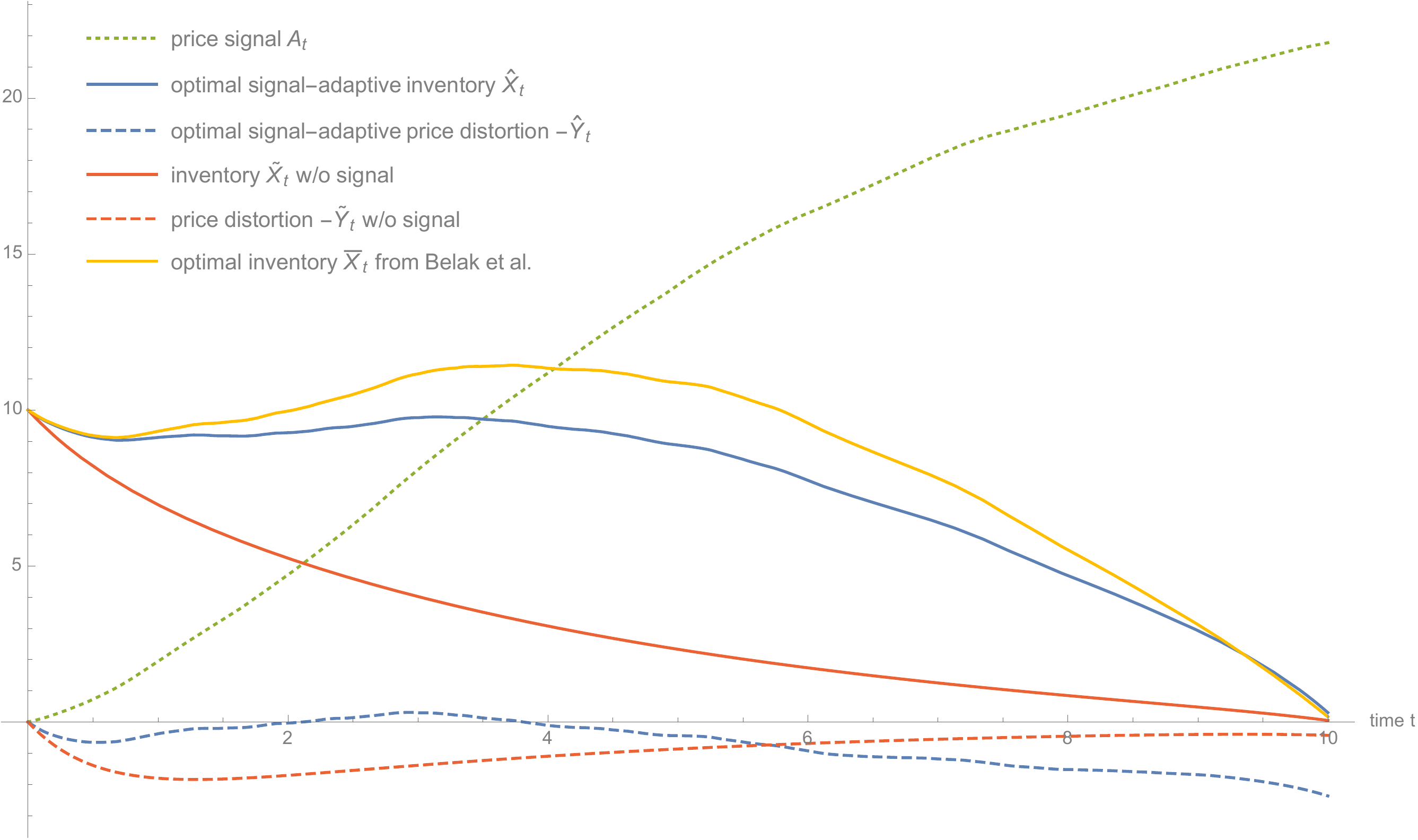}
    \caption{Upper panel: Similar to the upper panel of
      Figure~\ref{fig:example2}, optimal signal-adaptive inventory
      (solid blue) and corresponding selling rate (solid black) for a
      strongly increasing price signal process (dashed green). Lower
      Panel: Similar to the lower panel of Figure~\ref{fig:example2},
      comparison of the different optimal inventory and corresponding
      price distortion trajectories.}
    \label{fig:example3}
  \end{center}
\end{figure}

Figure~\ref{fig:example1} on the upper panel shows a realization of
the Ornstein-Uhlenbeck signal rate process $(I_t)_{0 \leq t \leq T}$
as in~\eqref{I-OU} in solid grey together with the resulting price
signal process $(A_t)_{0 \leq t \leq T}$ from~\eqref{a-i} in dashed
green. On the lower panel, we illustrate the corresponding optimal
signal-adaptive inventory $\hat{X}$ together with its selling rate
$\hat{u}$. In Figure~\ref{fig:example12} we compare the optimal
signal-adaptive inventory $\hat{X}$ and corresponding price distortion
$\hat{Y}$ (depicted in solid and dashed blue, respectively) for the
same price signal trajectory from Figure~\ref{fig:example1} with the
optimal inventory $\tilde X$ and price distortion~$\tilde Y$ ignoring
the signal (depicted in solid and dashed red, respectively). We also
plot in solid yellow the optimal inventory $\bar{X}$ for the purely
temporary price impact case. Interestingly, one can observe some
differences between the optimal strategies within the different
frameworks. As expected, in contrast to the strictly decreasing
inventory~$\tilde X$ ignoring the price signal process, the
signal-adaptive inventories $\hat{X}$ and~$\bar X$ utilize their
information about the upward trend of the latter and slow down the
liquidation of the risky asset midway. Moreover, the inventory
$\bar X$ taking into account only temporary price impact does so more
aggressively which results in trading also in the opposite direction
and buying some shares of the risky asset amid its liquidation
schedule.

Figures~\ref{fig:example2} and~\ref{fig:example3} illustrates in
similar fashion the optimal inventory, selling rate and price
distortion trajectories for two extreme scenarios: A strongly
decreasing price signal (Figure~\ref{fig:example2}) and a strongly
increasing price signal (Figure~\ref{fig:example3}). Again, we observe
that in a purely temporary price impact setup the trader tends to take
more risks by trading more boldly in the opposite direction to her
selling intentions in order to profit from the perceived information
about the price signal's tendencies. In fact, recall that the feedback
form of the optimal selling rate $\hat u$ in~\eqref{u-sol2}
compensates for the induced price distortion $Y^{\hat{u}}$. It is
therefore sensible to expect that this results in the observed
deceleration of the overall turnover rate as shown by the graphs in
Figures~\ref{fig:example1} to~\ref{fig:example3}.

\section{Proofs} \label{sec:proof}

\subsection{Proof of Theorem~\ref{thm:main}} \label{subsec:proofmainthm}

In fact, the probabilistic and convex analytic calculus of variations approach from~\citet{BankSonerVoss:17} can be brought to bear to 
prove our main Theorem~\ref{thm:main}. Indeed, note that for any $u \in \mathcal{A}$ the map $u \mapsto J(u)$ in~\eqref{def:objective} is strictly concave. Therefore, it admits a unique maximizer characterized by the critical point at which the G\^ateaux derivative
\begin{equation} \label{def:gateaux}
    \langle J'(u), \alpha \rangle \triangleq \lim_{\varepsilon \rightarrow 0} \frac{J(u + \varepsilon \alpha) - J(u)}{\varepsilon}
\end{equation}
of the functional $J$ vanishes for any direction $\alpha = (\alpha_t)_{0 \leq t \leq T} \in \mathcal{A}$; see, e.g.,~\cite{EkelTem:99}. The G\^ateaux derivative in~\eqref{def:gateaux} can be readily computed.

\begin{lemma}  \label{lem:gateaux}
For $u \in \mathcal{A}$ we have
\begin{equation}
\begin{aligned} \label{eq:gateaux}
   \langle J'(u), \alpha \rangle  &= \mathbb{E} \Bigg[\int_0^T \alpha_s \left( P_s - \kappa Y^u_s - \kappa \int_s^T e^{-\rho (t-s)} \gamma u_t dt - 2 \lambda u_s \right. \Bigg. \\
    & \hspace{75pt} \Bigg. \Bigg. + 2 \phi \int_s^T X^u_t dt + 2 \varrho X^u_T - P_T \Bigg) ds \Bigg]
\end{aligned}
\end{equation}
for any $\alpha \in \mathcal{A}$.
\end{lemma}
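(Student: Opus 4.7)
The plan is to compute the directional derivative term by term, exploiting the fact that both the inventory map $u\mapsto X^{u}$ in~\eqref{def:X} and the transient distortion map $u\mapsto Y^{u}$ in~\eqref{def:Y} are affine in the control. Specifically, for $u,\alpha\in\mathcal A$ and $\varepsilon\in\mathbb R$, one has
\begin{equation*}
X^{u+\varepsilon\alpha}_{t}=X^{u}_{t}-\varepsilon\!\int_{0}^{t}\!\alpha_{s}\,ds,\qquad Y^{u+\varepsilon\alpha}_{t}=Y^{u}_{t}+\varepsilon\gamma\!\int_{0}^{t}\!e^{-\rho(t-s)}\alpha_{s}\,ds.
\end{equation*}
Plugging these expressions into $J(u+\varepsilon\alpha)$, every summand in~\eqref{def:objective} becomes an at most quadratic polynomial in $\varepsilon$, so the pointwise $\omega$-wise difference quotient in~\eqref{def:gateaux} converges as $\varepsilon\to 0$ to an explicit $L^{1}(\Omega)$-random variable.

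The interchange of limit and expectation I would justify by noting that the quadratic-in-$\varepsilon$ remainder in each term is bounded uniformly in $|\varepsilon|\le 1$ by an integrable random variable: the temporary-impact and inventory-penalty terms are controlled via the Cauchy--Schwarz bound $\mathbb E\int_{0}^{T}(|u_{s}|+|\alpha_{s}|)^{2}ds<\infty$ inherited from $u,\alpha\in\mathcal A$; the transient-impact term is controlled analogously after observing that $\sup_{t\le T}|Y^{u+\varepsilon\alpha}_{t}|$ is in $L^{2}$ via Jensen on the exponential kernel; and the terms involving $P$ and $P_{T}$ are handled by the standing condition $P\in\mathcal H^{2}$ from~\eqref{ass:P}, together with Doob and the BDG inequality for the martingale part $\bar M$.

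Having obtained the derivative at $\varepsilon=0$, I would collect the five contributions
\begin{equation*}
\mathbb E\Bigg[\int_{0}^{T}(P_{t}-\kappa Y^{u}_{t})\alpha_{t}\,dt-\kappa\!\int_{0}^{T}\!u_{t}Y^{\alpha,0}_{t}\,dt-2\lambda\!\int_{0}^{T}\!u_{t}\alpha_{t}\,dt-P_{T}\!\int_{0}^{T}\!\alpha_{s}\,ds-2\phi\!\int_{0}^{T}\!X^{u}_{t}\!\int_{0}^{t}\!\alpha_{s}\,ds\,dt+2\varrho X^{u}_{T}\!\int_{0}^{T}\!\alpha_{s}\,ds\Bigg],
\end{equation*}
where $Y^{\alpha,0}_{t}\triangleq\gamma\int_{0}^{t}e^{-\rho(t-s)}\alpha_{s}\,ds$. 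The remaining work is purely a Fubini rearrangement so that everything is integrated against $\alpha_{s}\,ds$: in the transient-impact cross-term one switches the order to get $\int_{0}^{T}\alpha_{s}\bigl(\kappa\gamma\int_{s}^{T}e^{-\rho(t-s)}u_{t}\,dt\bigr)ds$, and in the running inventory penalty one gets $\int_{0}^{T}\alpha_{s}\bigl(2\phi\int_{s}^{T}X^{u}_{t}\,dt\bigr)ds$. Collecting all six summands under $\int_{0}^{T}\alpha_{s}(\cdots)ds$ yields precisely~\eqref{eq:gateaux}.

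The only genuine obstacle is the measure-theoretic bookkeeping: the Fubini applications require joint integrability of the integrands against $ds\otimes dt\otimes d\mathbb P$, which once again rests on $u,\alpha\in\mathcal A$ (hence $L^{2}$) and on $P\in\mathcal H^{2}$ to control the terms containing $P$ and $P_{T}$. Everything else is elementary algebra on the linear dependence of $X^{u}$ and $Y^{u}$ on $u$.
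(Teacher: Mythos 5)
Your proposal is correct and follows essentially the same route as the paper: exploit the affine dependence of $X^u$ and $Y^u$ on the control, expand $J(u+\varepsilon\alpha)$ as an exact quadratic in $\varepsilon$, and apply Fubini twice to collect everything under $\int_0^T \alpha_s(\cdots)\,ds$. Apart from two cosmetic sign slips (the running-inventory contribution should enter your collected display with $+2\phi$, and the Fubini output of the transient cross-term should carry a minus sign, $-\kappa\gamma\int_s^T e^{-\rho(t-s)}u_t\,dt$), the argument matches the paper's, which in fact dispenses with your dominated-convergence step because the expansion is exactly quadratic in $\varepsilon$ with a single finite $\varepsilon^2$-coefficient, so the difference quotient converges trivially.
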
 

\begin{proof} 
Let $\varepsilon > 0$ and $u, \alpha \in \mathcal{A}$. Note that $X_t^{u+\varepsilon\alpha} = X_t^u - \varepsilon \int_0^t \alpha_s ds$ and $Y_t^{u+\varepsilon\alpha} = Y^u_t + \varepsilon \gamma \int_0^t e^{-\rho (t-s)} \alpha_s ds$ for all $t \in [0,T]$. Next, since 
\begin{equation*}
\begin{aligned}
    & J(u+\varepsilon \alpha) - J(u) \\
    & = \varepsilon \; \mathbb{E} \Bigg[ \int_0^T \left( P_t - \kappa Y^u_t \right) \alpha_t dt - \kappa \int_0^T u_t \left( \int_0^t e^{-\rho (t-s)} \gamma \alpha_s ds \right) dt \\
    & \hspace{40pt} - 2 \lambda \int_0^T u_t \alpha_t dt + 2 \phi \int_0^T X^u_t \left( \int_0^t \alpha_s ds \right) dt \\
    & \hspace{40pt} \Bigg. +2 \varrho X_T^u \int_0^T \alpha_s ds - P_T  \int_0^T \alpha_s ds \Bigg] \\
    & \quad + \varepsilon^2 \; \E \Bigg[ -\kappa \gamma\int_0^T \left(   \int_0^t e^{-\rho (t-s)}  \alpha_s ds\right )  \alpha_t dt  - \lambda \int_0^T  \alpha_t^2 dt \Bigg. \\
    & \hspace{55pt}  - \phi \int_0^T \left( \int_0^t \alpha_s ds \right)^2 dt -\varrho \left(  \int_0^T \alpha_s ds \right)^{2} \Bigg]
\end{aligned}
\end{equation*}
we obtain the desired result in~\eqref{eq:gateaux} after applying Fubini's theorem twice. Also observe that all terms are finite since $u, \alpha \in \mathcal{A}$.
\end{proof}

Given the explicit expression of the G\^ateaux derivative in~\eqref{eq:gateaux} we can now derive a first order optimality condition. It takes the form of a coupled system of linear forward backward stochastic differential equations.

\begin{lemma} \label{lem:FOC}
A control $\hat{u} \in \mathcal A$ solves the optimization problem in~\eqref{def:optimization} if and only if the processes $(X^{\hat{u}},Y^{\hat{u}},{\hat{u}},Z^{\hat{u}})$ satisfy following coupled linear forward backward SDE system 
\begin{equation} \label{eq:FBSDE}
\left\{
\begin{aligned}
    dX^u_t = & \, - u_t dt, \quad X^u_0 = x\\
    dY^u_t = & \, -\rho Y_t^u dt + \gamma u_t dt, \quad Y^u_0 = y \\
    du_t = & \, \frac{dP_t}{2\lambda}  + \frac{\kappa\rho Y^u_t }{2\lambda} dt - \frac{\phi X^u_t}{\lambda} dt + \frac{\rho Z^u_t }{2\lambda} dt + dM_t, \quad u_T = \frac{\varrho X^u_T }{\lambda} - \frac{\kappa Y^u_T}{2\lambda}, \\
    dZ^u_t = & \,\rho Z^u_t dt + \kappa \gamma u_t dt + dN_t, \quad  Z^u_T = 0,
\end{aligned}
\right.
\end{equation}
for two suitable square integrable martingales $M=(M_t)_{0\leq t \leq T}$ and $N=(N_t)_{0\leq t\leq T}$.
\end{lemma}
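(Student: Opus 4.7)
The plan is to exploit the strict concavity of $J$ on $\mathcal A$ and to translate the first-order optimality condition supplied by Lemma~\ref{lem:gateaux} into the claimed forward-backward system. First, one checks that $u\mapsto J(u)$ is strictly concave: the $\varepsilon^2$-terms collected in the proof of Lemma~\ref{lem:gateaux} form a strictly negative quadratic form in $\alpha\neq 0$, because $\lambda,\phi,\varrho>0$ and the kernel $e^{-\rho(t-s)}$ is positive-definite on $L^2([0,T])$. Hence $\hat u\in\mathcal A$ is the (unique) maximizer if and only if $\langle J'(\hat u),\alpha\rangle=0$ for every $\alpha\in\mathcal A$.

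Denote by $B^u_s$ the bracketed integrand in~\eqref{eq:gateaux}, so that $\langle J'(u),\alpha\rangle=\mathbb E[\int_0^T\alpha_s B^u_s\,ds]$. Since $\alpha$ may be chosen to be any progressively measurable $L^2$-process, a standard density/tower argument (take $\alpha_s=\one_{(a,b]}(s)F$ with $F$ bounded and $\mathcal F_a$-measurable, and approximate) reduces vanishing of $\langle J'(u),\alpha\rangle$ for every $\alpha$ to the pointwise identity $\mathbb E_s[B^u_s]=0$ for $ds\otimes d\mathbb P$-a.e. $(s,\omega)$. Since $P_s,Y^u_s,u_s$ are $\mathcal F_s$-measurable, this rearranges into
$$2\lambda u_s = P_s - \mathbb E_s[P_T] - \kappa Y^u_s - \kappa\gamma\,\mathbb E_s\!\!\left[\int_s^T e^{-\rho(t-s)}u_t\,dt\right] + 2\phi\,\mathbb E_s\!\!\left[\int_s^T X^u_t\,dt\right] + 2\varrho\,\mathbb E_s[X^u_T], \quad (\star)$$
which is a probabilistic form of the first-order condition.

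Next, introduce the adjoint process $Z^u_t \triangleq -\kappa\gamma\,\mathbb E_t\!\left[\int_t^T e^{-\rho(s-t)}u_s\,ds\right]$. A short Itô computation on $e^{-\rho t}Z^u_t$, using the square-integrable martingale $\mathbb E_\cdot[\int_0^T e^{-\rho s}u_s\,ds]$ (which is in $\mathcal H^2$ because $u\in\mathcal A$), produces exactly the backward equation $dZ^u_t = \rho Z^u_t\,dt + \kappa\gamma u_t\,dt + dN_t$ with $Z^u_T=0$ and a square-integrable martingale $N$. Substituting $Z^u$ into $(\star)$ yields
$$2\lambda u_t = P_t - \mathbb E_t[P_T] - \kappa Y^u_t + Z^u_t + 2\phi\,\mathbb E_t\!\!\left[\int_t^T X^u_s\,ds\right] + 2\varrho\,\mathbb E_t[X^u_T];$$
setting $t=T$ recovers the terminal condition $u_T = \varrho X^u_T/\lambda - \kappa Y^u_T/(2\lambda)$, while differentiating in $t$ using the canonical decompositions of the $L^2$-martingales $\mathbb E_\cdot[P_T]$, $\mathbb E_\cdot[\int_0^T X^u_s\,ds]$, $\mathbb E_\cdot[X^u_T]$ together with the dynamics of $Y^u$ and $Z^u$ (the $\pm\kappa\gamma u_t\,dt$ contributions cancelling) produces precisely the BSDE for $u_t$ stated in~\eqref{eq:FBSDE}, with a residual square-integrable martingale $M$.

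The converse direction is obtained by reversing these steps: integrating the backward equation for $Z^u$ with integrating factor $e^{-\rho t}$ and conditioning on $\mathcal F_t$ re-establishes the stated probabilistic representation of $Z^u$; integrating the BSDE for $u$ backwards from its terminal condition and taking $\mathbb E_t$ re-establishes $(\star)$. Fubini and the tower property then give $\langle J'(u),\alpha\rangle=0$ for every $\alpha\in\mathcal A$, so $u$ is the unique optimizer by strict concavity. I expect the main obstacle to be the rigorous justification of the passage from ``$\langle J'(u),\alpha\rangle=0$ for every $\alpha\in\mathcal A$'' to the pointwise projection identity $(\star)$: this requires a density argument for progressively measurable test functions in $L^2(dt\otimes d\mathbb P)$ together with careful bookkeeping to verify $L^2$-integrability of all martingale integrands, relying on $P\in\mathcal H^2$ and $u\in\mathcal A$.
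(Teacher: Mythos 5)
Your proof is correct and follows essentially the same route as the paper: strict concavity plus the vanishing G\^ateaux derivative, reduction to the pointwise projected first-order condition, introduction of the adjoint process $Z^u_t=-\kappa\gamma\,\mathbb E_t\!\left[\int_t^T e^{-\rho(s-t)}u_s\,ds\right]$ (which coincides with the paper's $\kappa e^{\rho t}\bigl(\int_0^t e^{-\rho s}\gamma u_s\,ds-\tilde N_t\bigr)$), and differentiation of the resulting representation to obtain the two BSDEs, with the converse obtained by reversing these steps. The only cosmetic difference is that you justify the localization step via a density argument with indicator test processes where the paper invokes optional projection, and you spell out the strict concavity (correctly, via the positive semidefiniteness of the symmetrized kernel $e^{-\rho|t-s|}$) which the paper merely asserts.
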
 

\begin{remark}
The appearance of the auxiliary process $Z^u$ in the above FBSDE system~\eqref{eq:FBSDE} is very natural in our setup because of the two-dimensional controlled state variable $(X^u,Y^u)$ in~\eqref{def:X} and~\eqref{def:Y}. In fact, the two processes $u$ and $Z^u$ satisfying the BSDEs in~\eqref{eq:FBSDE} correspond to the two associated so-called adjoint processes which are arising in Pontryagin's stochastic maximum principle; see, e.g.,~\citet{Carmona:16}, Chapter 4.2.
\end{remark}

\begin{proof} 
Since we are maximizing the strictly concave functional $u \mapsto J(u)$ over $\mathcal A$, a necessary and sufficient condition for the optimality of $\hat{u} \in \mathcal A$ with corresponding controlled state processes $X^{\hat{u}}$ and $Y^{\hat{u}}$ in \eqref{def:X} and \eqref{def:Y}, respectively, is given by
\begin{equation*}
     \langle J'(\hat{u}), \alpha \rangle = 0 \; \text{for all } \alpha \in \mathcal A;
\end{equation*}
cf., e.g.,~\cite{EkelTem:99}. By Lemma \ref{lem:gateaux} this condition is equivalent to
\be \label{proof:FOC:eqFOC}
\begin{aligned} 
&\mathbb{E} \Bigg[\int_0^T \alpha_s \bigg( P_s - \kappa Y^{\hat{u}}_s - \kappa \int_s^T e^{-\rho (t-s)} \gamma \hat{u}_t dt - 2 \lambda u_s \\
& \hspace{58pt} \Bigg. \Bigg. + 2 \phi \int_s^T X^{\hat{u}}_t dt   + 2 \varrho X^{\hat{u}}_T - P_T \bigg) ds \Bigg] =0
\end{aligned} 
\ee 
for all $\alpha \in \mathcal{A}$. In the following we will argue that $\hat{u} \in \mathcal A$ with $(X^{\hat{u}},Y^{\hat{u}})$ satisfies the first order condition in~\eqref{proof:FOC:eqFOC} if and only if $(X^{\hat{u}},Y^{\hat{u}},{\hat{u}},Z^{\hat{u}})$ satisfy the FBSDE system in~\eqref{eq:FBSDE}.

\emph{Necessity:} Although necessity follows from the uniqueness
  of the optimal solution together with the sufficiency argument below, we give
  the complete proof here in order to shed light on the derivation
  of the FBSDE system in \eqref{eq:FBSDE}. 

  Assume that $\hat u \in \mathcal A$ maximizes $J$, i.e., the first order condition in~\eqref{proof:FOC:eqFOC} is satisfied. Then, by applying optional projection we also obtain that
\begin{equation} \label{proof:FOC:eq1}
\begin{aligned}
& \mathbb{E} \left[ \int_0^T \alpha_s \left( P_s - \kappa Y^{\hat{u}}_s - \kappa \, \mathbb{E}_s \left[ \int_s^T e^{-\rho (t-s)} \gamma \hat{u}_t dt \right] - 2 \lambda \hat{u}_s \right. \right. \\
& \left. \left. \hspace{58pt} + \, \mathbb{E}_s \left[ 2 \phi \int_s^T X^{\hat{u}}_t dt + 2 \varrho X^{\hat{u}}_T - P_T \right] \right) ds \right] = 0
\end{aligned}
\end{equation}
for all $\alpha \in \mathcal A$. But this implies that
\be \label{proof:FOC:eq2}
\begin{aligned}
   & P_s - \kappa Y^{\hat{u}}_s - \kappa e^{\rho s} \mathbb{E}_s \left[ \int_s^T e^{-\rho t} \gamma \hat{u}_t dt \right] - 2 \lambda \hat{u}_s + \mathbb{E}_s \left[2 \phi \int_s^T X^{\hat{u}}_t dt + 2 \varrho X^{\hat{u}}_T - P_T \right]  \\
    & = \, P_s - \kappa Y^{\hat{u}}_s - \kappa e^{\rho s} \left( \mathbb{E}_s \left[ \int_0^T e^{-\rho t} \gamma \hat{u}_t dt \right] - \int_0^s e^{-\rho t} \gamma \hat{u}_t dt \right) - 2 \lambda \hat{u}_s   \\
    & \hspace{12pt} + \mathbb{E}_s \left[ 2 \phi \int_0^T X^{\hat{u}}_t dt + 2 \varrho X^{\hat{u}}_T - P_T \right] - 2 \phi \int_0^s X^{\hat{u}}_t dt \\
    & = 0 \quad d\P \otimes ds\textrm{-a.e. on } \Omega \times [0,T].
\end{aligned}
\ee
Now, by introducing the square integrable martingales 
\be \label{proof:FOC:martingales} 
\tilde{M}_s \triangleq \mathbb{E}_s \left[ 2 \phi \int_0^T X^{\hat{u}}_t dt + 2 \varrho X^{\hat{u}}_T - P_T \right], \quad
\tilde{N}_s \triangleq \mathbb{E}_s \left[ \int_0^T e^{-\rho t} \gamma \hat{u}_t dt \right],
\ee
as well as the auxiliary square integrable process 
\begin{equation} \label{proof:FOC:defZ} 
    Z^{\hat{u}}_s \triangleq \kappa e^{\rho s} \left( \int_0^s e^{-\rho t} \gamma \hat{u}_t dt - \tilde{N}_s \right)
\end{equation}
for all $s \in [0,T]$ (note that $P_T \in L^2(\Omega,\mathcal F_T, \P)$ because of~\eqref{ass:P}; and that $u \in L^2(\P \times [0,T])$ which also implies $X^{\hat{u}}_T \in L^2(\Omega,\mathcal F_T, \P)$), we can rewrite~\eqref{proof:FOC:eq2} as 
\be \label{proof:FOC:eq3}
\begin{aligned}
& P_s - \kappa Y^{\hat{u}}_s - \kappa e^{\rho s} \left(\tilde{N}_s - \int_0^s e^{-\rho t} \gamma \hat{u}_t dt \right) - 2 \lambda \hat{u}_s + \tilde{M}_s - 2 \phi \int_0^s X^{\hat{u}}_t dt \\
& = P_s - \kappa Y^{\hat{u}}_s + Z^{\hat{u}}_s - 2 \lambda \hat{u}_s + \tilde{M}_s - 2 \phi \int_0^s X^{\hat{u}}_t dt \\
& = 0 \quad d\P \otimes ds\textrm{-a.e. on } \Omega \times [0,T].
\end{aligned}
\ee
Note that $Z^{\hat{u}}$ in~\eqref{proof:FOC:defZ} satisfies the BSDE
\begin{equation} \label{proof:FOC:ZBSDE} 
    dZ^{\hat{u}}_t = \rho Z^{\hat{u}}_t dt + \kappa \gamma \hat{u}_t dt - \kappa e^{\rho t} d\tilde{N}_t, \qquad Z^{\hat{u}}_T = 0. 
\end{equation}
Also observe that the controlled forward dynamics of $Y^{\hat{u}}$ in~\eqref{def:Y} satisfy 
\begin{equation} \label{proof:FOC:YSDE} 
    Y^{\hat{u}}_0 = y, \quad dY^{\hat{u}}_t = - \rho Y^{\hat{u}}_t dt + \gamma \hat{u}_t dt.
\end{equation}
Hence, it follows from the representation in~\eqref{proof:FOC:eq3} that $\hat{u}$ satisfies the BSDE 
\be \label{proof:FOC:uBSDE}
\begin{aligned}
    d\hat{u}_s = & \, \frac{dP_s }{2\lambda} - \frac{\kappa}{2\lambda} dY^{\hat{u}}_s - \frac{\phi X^{\hat{u}}_s}{\lambda}  ds + \frac{\rho Z^{\hat{u}}_s}{2\lambda}  ds + \frac{\kappa\gamma}{2\lambda} \hat{u}_s ds + \frac{d\tilde{M}_s}{2\lambda}  - \frac{\kappa e^{\rho s}}{2\lambda}  d\tilde{N}_s\\
    = & \frac{dP_s }{2\lambda} + \frac{\kappa\rho Y^{\hat{u}}_s}{2\lambda} ds - \frac{\phi X^{\hat{u}}_s}{\lambda} ds + \frac{\rho Z^{\hat{u}}_s}{2\lambda}  ds + \frac{d\tilde{M}_s}{2\lambda}  - \frac{\kappa e^{\rho s}}{2\lambda}  d\tilde{N}_s, \\
    \hat{u}_T = & \, \frac{\rho X^{\hat{u}}_T}{\lambda}  - \frac{\kappa Y^{\hat{u}}_T}{2\lambda}.
\end{aligned}
\ee
Consequently, together with the forward dynamics of $X^{\hat{u}}$ in~\eqref{def:X}, we can conclude from \eqref{proof:FOC:YSDE}, \eqref{proof:FOC:uBSDE} and \eqref{proof:FOC:ZBSDE} that the processes $(X^{\hat{u}}, Y^{\hat{u}},\hat{u},Z^{\hat{u}})$ satisfy the FBSDE system in~\eqref{eq:FBSDE} with suitably chosen square integrable martingales $M=(M_t)_{0\leq t \leq T}$ and $N=(N_t)_{0\leq t\leq T}$ in terms of $\tilde{M}=(\tilde{M}_t)_{0\leq t \leq T}$ and $\tilde{N}=(\tilde{N}_t)_{0\leq t\leq T}$ given in~\eqref{proof:FOC:martingales}.

\emph{Sufficiency:} Let us now assume that $(X^{\hat{u}}, Y^{\hat{u}}, \hat{u}, Z^{\hat{u}})$ is a solution to the FBSDE system in~\eqref{eq:FBSDE} and $\hat{u} \in \mathcal A$. We have to show that $\hat{u}$ with controlled states $(X^{\hat{u}}, Y^{\hat{u}})$ satisfies the first order condition in~\eqref{proof:FOC:eqFOC} or, equivalently, in~\eqref{proof:FOC:eq1}. To this end, first note that the unique strong solution $\hat{u}$ to the associated  linear backward SDE in~\eqref{eq:FBSDE} is indeed given by~\eqref{proof:FOC:eq3}, i.e.,
\begin{equation*}
    2 \lambda \hat{u}_s = P_s - \kappa Y^{\hat{u}}_s - \kappa e^{\rho s} \left(\tilde{N}_s - \int_0^s e^{-\rho t} \gamma \hat{u}_t dt \right) + \tilde{M}_s - 2 \phi \int_0^s X^{\hat{u}}_t dt
\end{equation*}
with $\tilde{M}$ and $\tilde{N}$ as defined in~\eqref{proof:FOC:martingales}. Plugging this into~\eqref{proof:FOC:eqFOC} and applying Fubini's theorem yields
\bn
&& \mathbb{E} \left[ \int_0^T \alpha_s \left( \kappa e^{\rho s} \left( \tilde{N}_s - \int_0^T e^{-\rho t} \gamma \hat{u}_t dt \right) - \tilde{M}_{s}  + 2 \phi \int_0^T X^{\hat{u}}_t dt + 2 \varrho X^{\hat{u}}_T - P_{T} \right) dt \right] \\
&& = \mathbb{E} \left[ \int_0^T \alpha_s \left(\kappa e^{\rho s} \left( \tilde{N}_s  - \tilde{N}_{T} \right) - \tilde{M}_{s} + \tilde{M}_{T} \right) dt \right] \\
&& = \mathbb{E} \left[ \int_0^T \alpha_s \left( \kappa e^{\rho s} \left( \tilde{N}_s - \E_s[ \tilde{N}_{T}] \right) + \E_s[\tilde{M}_{T}] - \tilde{M}_{s} \right) dt \right] = 0
\en
for all $\alpha \in \mathcal A$ since $\tilde{N}$ and $\tilde{M}$ are martingales. Consequently, the first order condition in \eqref{proof:FOC:eqFOC} is satisfied and $\hat{u} \in \mathcal A$ is optimal.
\end{proof} 
 
\noindent\textbf{Proof of Theorem~\ref{thm:main}.} \emph{Step 1:} In view of Lemma~\ref{lem:FOC} we have to solve the linear FBSDE system in~\eqref{eq:FBSDE}. One possibility to achieve this is to adapt the approach in~\cite{BMO:19}. Introducing
\begin{equation} \label{def:XM}
    \boldsymbol{X}^u_t \triangleq \begin{pmatrix}
    X^u_t \\ Y^u_t \\ u_t \\ Z^u_t
    \end{pmatrix},   
    \, \quad  \boldsymbol{M}_t \triangleq \begin{pmatrix}
    0 \\ 0 \\ P_t - 2 \lambda M_t \\ 2 \lambda N_t
    \end{pmatrix} \quad (0 \leq t \leq T),
\end{equation}
the linear system in~\eqref{eq:FBSDE}, together with matrix $L$ in~\eqref{def:A}, can be written as
\begin{equation} \label{eq:linODE}
    d\boldsymbol{X}^u_t = L \boldsymbol{X}^u_t dt + \frac{1}{2\lambda} d\boldsymbol{M}_t \quad (0 \leq t \leq T)
\end{equation}
with initial conditions $\boldsymbol{X}^{u,1}_0 = x$, $\boldsymbol{X}^{u,2}_0 = y$ and terminal conditions
\begin{equation} \label{def:termCond}
 \left( \varrho/\lambda, -\kappa/(2\lambda),-1,0 \right) \boldsymbol{X}^u_T = 0 \quad \text{and} \quad (0,0,0,1) \boldsymbol{X}^u_T = 0.
\end{equation}
Observe that the unique solution of~\eqref{eq:linODE} can be represented in terms of the matrix exponential $S(t) = \exp(A t) = (S_{ij}(t))_{1 \leq i,j, \leq 4}$ introduced in~\eqref{def:S} as  
\begin{equation} \label{eq:X-T}
    \boldsymbol{X}^u_T = S(T-t) \boldsymbol{X}^u_t + \frac{1}{2\lambda} \int_t^T S(T-s) d\boldsymbol{M}_s \quad (0 \leq t \leq T).
\end{equation}
Next, following the same idea as in the proof of Theorem 3.1 in~\cite{BMO:19}, we use the first terminal condition in~\eqref{def:termCond} and multiply~\eqref{eq:X-T} by $( \varrho/\lambda, -\kappa/(2\lambda),-1,0)$ to obtain
\begin{align*} 
0 = & \; G(T-t) \boldsymbol{X}^u_t + \frac{1}{2\lambda} \int_t^T G(T-s) d\boldsymbol{M}_s \nonumber \\
= & \; G_1(T-t) X^u_t + G_2(T-t) Y^u_t + G_3(T-t) u_t + G_4(T-t) Z^u_t \nonumber \\
& + \frac{1}{2\lambda} \int_t^T G_3(T-s) (dP_s - 2\lambda dM_s) + \int_t^T G_4(T-s)dN_s \qquad (0 \leq t \leq T)
\end{align*}
with $G=(G_i)_{i=1,\ldots,4}$ as defined in~\eqref{def:G}. Since $G_3(T-t) \neq 0$ for all $t \in [0,T]$ (see Lemma~\ref{lemma-g-s} (iii) below) we get
\begin{equation}
\begin{aligned} 
u_t = & -\frac{G_1(T-t)}{G_3(T-t)} X^u_t - \frac{G_2(T-t)}{G_3(T-t)} Y^u_t - \frac{G_4(T-t)}{G_3(T-t)} Z^u_t \nonumber \\
& - \frac{1}{2\lambda} \int_t^T \frac{G_3(T-s)}{G_3(T-t)} (dP_s - 2\lambda dM_s) - \int_t^T \frac{G_4(T-s)}{G_3(T-t)} dN_s \qquad (0 \leq t \leq T). 
\end{aligned}
\end{equation}
Taking conditional expectation in the latter equation and using the fact that $P \in \mathcal H^2$, as well as that $M$, $N$ are square integrable martingales, we arrive at the identity 
\begin{equation} \label{eq:u_Rep}
\begin{aligned} 
u_{t} = & -\frac{G_1(T-t)}{G_3(T-t)}X^u_{t}  -\frac{G_2(T-t)}{G_{3}(T-t)}Y^u_{t}  - \frac{G_4(T-t)}{G_3(T-t)}Z^u_t \\
 & - \frac{1}{2\lam}\mathbb E_{t} \left[ \int_{t}^T\frac{G_3(T-s)}{G_{3}(T-t)}dA_{s} \right] \qquad (0 \leq t \leq T).
\end{aligned}
\end{equation}
Moreover, using also the second terminal condition in~\eqref{def:termCond} and multiplying~\eqref{eq:X-T} by $(0, 0, 0, 1)$ gives us 
\begin{align*} 
0 = & \; S_{4,\cdot}(T-t) \boldsymbol{X}^u_t + \frac{1}{2\lambda} \int_t^T S_{4,\cdot}(T-s) d\boldsymbol{M}_s \nonumber \\
= & \; S_{4,1}(T-t) X^u_t + S_{4,2}(T-t) Y^u_t + S_{4,3}(T-t) u_t + S_{4,4}(T-t) Z^u_t \nonumber \\
& + \frac{1}{2\lambda} \int_t^T S_{4,3}(T-s) (dP_s - 2\lambda dM_s) + \int_t^T S_{4,4}(T-s)dN_s \qquad (0 \leq t \leq T),
\end{align*}
where we used the notation $S_{4,\cdot} = (S_{4,1}, S_{4,2}, S_{4,3}, S_{4,4})$. Since $S_{4,4}(T-t) \neq 0$ for all $t \in [0,T]$ (see Lemma~\ref{lemma-g-s} (ii) below), solving for $Z^u$ and taking once more conditional expectation as above yields
\begin{equation} \label{eq:Z_Rep}
\begin{aligned} 
Z^u_t = & - \frac{S_{4,1}(T-t)}{S_{4,4}(T-t)} X^u_t  - \frac{S_{4,2}(T-t)}{S_{4,4}(T-t)} Y^u_t - \frac{S_{4,3}(T-t)}{S_{4,4}(T-t)} u_t  \\
& - \frac{1}{2\lambda} \mathbb E_t \left[ \int_t^T \frac{S_{4,3}(T-s)}{S_{4,4}(T-t)} dA_s \right] \qquad (0 \leq t \leq T).
\end{aligned}
\end{equation}
Finally by (\ref{g-s-cond}), $v_0$ in~\eqref{def:vs} is well-defined. Hence, plugging~\eqref{eq:Z_Rep} into~\eqref{eq:u_Rep} and solving for $u$ yields
\begin{align*}
u_{t} = & \; v_0(T-t) \left(
          \frac{G_4(T-t)}{G_3(T-t)}\frac{S_{4,1}(T-t)}{S_{4,4}(T-t)}
          -\frac{G_1(T-t)}{G_3(T-t)} \right) X^u_{t}  \\
& + v_0(T-t) \left(
   \frac{G_4(T-t)}{G_3(T-t)}\frac{S_{4,2}(T-t)}{S_{4,4}(T-t)}
   -\frac{G_2(T-t)}{G_3(T-t)} \right) Y^u_{t} \\
& +\frac{1}{2\lambda} v_0(T-t) \left( \frac{G_4(T-t)}{G_3(T-t)}\E_t \left[
   \int_t^T  \frac{S_{4,3}(T-s)}{S_{4,4}(T-t)}dA_s \right] - \E_{t} \left[
   \int_{t}^T\frac{G_3(T-s)}{G_{3}(T-t)}dA_{s} \right] \right) 
\end{align*}
as claimed in~\eqref{eq:opt_u}. 

\emph{Step 2:} It remains to argue that $\hat{u}$ in~\eqref{eq:opt_u} belongs to $\mathcal A$. First, due to Lemma~\ref{lemma-g-s}~(i) and assumption~\eqref{g-s-cond} we can conclude that 
\bd 
\sup_{0\leq t\leq T} |v_0(T-t)|< \infty.
\ed
Together with Lemma~\ref{lemma-g-s}~(ii) and (iii) it then follows that the coefficients in front of $X^{\hat{u}}$ and $Y^{\hat{u}}$ in~\eqref{eq:opt_u} are bounded on $[0,T]$. By the same arguments we obtain that there exists a constant $C > 0$ such that
\bn
&&\sup_{0\leq t\leq T}  \left\{ \frac{G_4(T-t)}{G_3(T-t)}\E_t \left[
   \int_t^T  \frac{S_{4,3}(T-s)}{S_{4,4}(T-t)}dA_s \right] - \E_{t} \left[
   \int_{t}^T\frac{G_3(T-s)}{G_{3}(T-t)}dA_{s} \right] \right\} \\
   &&\leq  C \, \mathbb E \left[ \int_0^T|dA_t| \right] < \infty
\en
due to~\eqref{ass:P}. Together with~\eqref{def:X} and~\eqref{def:Y} we use these bounds in~\eqref{eq:opt_u} to get 
$$
\mathbb{E}[\hat{u}_t^2] \leq C_1+ C_2\int_0^t \mathbb{E}[\hat{u}_s^2] \, ds \quad (0 \leq t \leq T)   
$$
for some positive constants $C_1,C_2$. From Gronwall's lemma it then follows that 
$$
\sup_{0\leq t\leq T} \mathbb E[\hat{u}_t^2] < \infty, 
$$
so clearly $\hat{u} \in \mathcal A$ by Fubini's theorem.
\qed 

\medskip

\noindent\textbf{Proof of Corollary~\ref{cor:main}.} Simply observe that the optimally controlled state variable $\mathbb X^{\hat{u}} = (X^{\hat{u}},Y^{\hat{u}})$ prescribed in~\eqref{eq:opt_u} and~\eqref{def:Y} satisfies following two-dimensional linear (random) ordinary differential equation
\begin{equation} \label{eq:linODEsysXY}
    \mathbb X^{\hat{u}}_0 =(x,y),\quad d\mathbb X^{\hat{u}}_t = B(t) \, \mathbb X^{\hat{u}}_t dt + \hat{\zeta}_t \, b \, dt \quad (0 \leq t \leq T)
\end{equation}
with $(\hat{\zeta}_{t})_{0\leq t \leq T}$ as given in~\eqref{def:zeta} and $b = (-1, \gamma)^\top \in \mathbb R^2$. Hence, it follows from standard existence and uniqueness results for linear ODEs (cf., e.g., \citet{KaratzasShreve}, Section 5.6, and the references therein) that $\mathbb X^{\hat{u}}$ is given by
\begin{equation*}
\mathbb{X}^{\hat{u}}_t 
= \Phi(t) \left(
\mathbb{X}^{\hat{u}}_0 
+ \int_0^t \hat{\zeta}_s \, \Phi^{-1}(s) \, b \, ds 
\right) \quad (0 \leq t \leq T),
\end{equation*}
where $\Phi$ denotes the unique nonsingular solution to the matrix differential equation in~\eqref{eq:ODEPhi}.
\qed

\subsection{Computing the matrix exponential} \label{subsec:matrixexponential}

To compute the matrix exponential $S(t)= (S_{ij}(t))_{0 \leq i,j \leq 4} = e^{Lt} $ for all $t \in [0, \infty)$ in \eqref{def:matrixExp} we diagonalize matrix $L$
in~\eqref{def:A}, i.e., decompose $L = U D U^{-1}$ with diagonal
matrix $D \in \mathbb R^{4\times 4}$ and invertible matrix $U \in \mathbb R^{4\times 4}$. Then, it follows that
\begin{equation} \label{def:decomS}
S(t) = U e^{D t} U^{-1} \quad (t \geq 0),
\end{equation} 
where $e^{D t} \in \mathbb R^{4\times 4}$ is again a diagonal matrix. Introducing the constants $\theta \triangleq \lambda\rho^2 + \rho\kappa\gamma + \phi$, as well as 
\begin{equation} \label{def:c1c2}
c_1 \triangleq \frac{\theta}{\lambda} > 0, \quad
c_2 \triangleq \sqrt{\frac{(\theta - 2 \phi)^2 + 4
    \phi\rho\kappa\gamma}{\lambda^2}} > 0,
\end{equation}
it can be easily checked that the eigenvalues of $A$ are given by
\begin{equation} \label{def:eigenvalues}
\nu_1 \triangleq -\sqrt{\frac{c_1 - c_2}{2}}, \quad
\nu_2 \triangleq -\nu_1, \quad \nu_3 \triangleq
-\sqrt{\frac{c_1 + c_2}{2}}, \quad \nu_4 \triangleq -\nu_3,
\end{equation}
with corresponding eigenvectors
\begin{equation*} \label{def:eigenvector}
  v_i \triangleq
  \begin{pmatrix}
    \frac{\rho - \nu_i}{\kappa\gamma\nu_i} \\
     \frac{\nu_i - \rho}{\kappa(\nu_i+\rho)} \\
     \frac{\nu_i - \rho}{\kappa\gamma} \\
     1
    \end{pmatrix} \qquad (i=1,2,3,4).
\end{equation*}
Also note that $c_1 - c_2 > 0$ in~\eqref{def:eigenvalues}: Indeed, $c_1>c_2$ is equivalent to $\theta^2 > (\theta - 2\phi)^2 + 4 \phi\rho\kappa\gamma$ and hence to $-4\phi\lambda\rho^{2}<0$, which is satisfied. Consequently, we obtain that
\begin{equation} \label{def:DU}
  D \triangleq
  \begin{pmatrix}
    \nu_1 & 0 & 0 & 0 \\
    0 & -\nu_1 & 0 & 0 \\
    0 & 0 & \nu_3 & 0 \\
    0 & 0 & 0 & -\nu_3 \\
  \end{pmatrix}, \quad
    U \triangleq
  \begin{pmatrix}
    -\frac{\nu_1 - \rho}{\kappa\gamma\nu_1} & -\frac{\nu_1 + \rho}{\kappa\gamma\nu_1} & - \frac{\nu_3 - \rho}{\kappa\gamma\nu_3} & - \frac{\nu_3 + \rho}{\kappa\gamma\nu_3}  \\
     \frac{\nu_1 - \rho}{\kappa(\nu_1+\rho)} &
     \frac{\nu_1 + \rho}{\kappa(\nu_1-\rho)} &
     \frac{\nu_3 - \rho}{\kappa(\nu_3+\rho)} &
     \frac{\nu_3 + \rho}{\kappa(\nu_3-\rho)}  \\
     \frac{\nu_1 - \rho}{\kappa\gamma} & - \frac{\nu_1 +
     \rho}{\kappa\gamma} & \frac{\nu_3 - \rho}{\kappa\gamma}
     & - \frac{\nu_3 + \rho}{\kappa\gamma} \\
     1 & 1 & 1 & 1
  \end{pmatrix}
\end{equation}
satisfy $L = U D U^{-1}$ with
  \begin{align*} 
  & U^{-1} = \\
  & \; \frac{1}{4\rho^2 (\nu_1^2 -
             \nu_3^2)} \label{def:Uinv} \\
  &
  \begin{pmatrix}
    -2 \gamma \kappa \nu_1 \nu_3^2 
    (\nu_1 + \rho) & - \kappa (\nu_1^2 - \rho^2)
    (\nu_3^2 - \rho^2) & 2 \gamma \kappa \rho^2 (\nu_1 + \rho)  & - (\nu_3^2 - \rho^2)
    (\nu_1 + \rho)^2\\ 
     -2 \gamma \kappa \nu_1 \nu_3^2 
      (\nu_1 - \rho)&
     - \kappa (\nu_1^2 - \rho^2)
    (\nu_3^2 - \rho^2) &
     - 2 \gamma \kappa \rho^2 (\nu_1 - \rho) &
    -(\nu_3^2 - \rho^2)
    (\nu_1 - \rho)^2 \\
      2 \gamma \kappa \nu_1^2 \nu_3 
    (\nu_3 + \rho) & \kappa (\nu_1^2 - \rho^2)
    (\nu_3^2 - \rho^2) & -2 \gamma \kappa \rho^2 (\nu_3 + \rho) 
     & (\nu_1^2 - \rho^2)
    (\nu_3 + \rho)^2 \\
     2 \gamma \kappa \nu_1^2 \nu_3 
    (\nu_3- \rho) &  \kappa (\nu_1^2 - \rho^2)
    (\nu_3^2 - \rho^2) & 2 \gamma \kappa \rho^2 (\nu_3 -
    \rho) 
     & (\nu_1^2 - \rho^2)
    (\nu_3 - \rho)^2
  \end{pmatrix}.
\end{align*}
Thus, the matrix exponential $S(t) = (S_{ij}(t))_{1 \leq i,j \leq 4}$ in~\eqref{def:matrixExp} is given by
\begin{equation}
  S(t) = U \begin{pmatrix}
    e^{\nu_1 t} & 0 & 0 & 0 \\
    0 & e^{-\nu_1 t} & 0 & 0 \\
    0 & 0 & e^{\nu_3 t} & 0 \\
    0 & 0 & 0 & e^{-\nu_3 t}\\
  \end{pmatrix} U^{-1} \qquad (t \geq 0).
\end{equation}
In particular, due to the fact that all entries in the last row of $U$ in~\eqref{def:DU} are equal to one, we easily get that the functions $(S_{4,j}(T-t))_{1\leq j \leq 4}$ are given by
\begin{align} 
    S_{4,1}(T-t) = & \,
    \frac{\gamma\kappa\nu_1\nu_3}{\rho^2 (\nu_1^2 -
             \nu_3^2)} 
             \Big(  \nu_1\nu_3 \cosh( \nu_3(T-t)) - \nu_1\nu_3 \cosh(\nu_1(T-t)) \Big. \nonumber \\
    & \hspace{70pt} \Big. + \rho \nu_1 \sinh( \nu_3(T-t)) - \rho \nu_3 \sinh(\nu_1(T-t))\Big), \label{eq:S41} \\
    S_{4,2}(T-t) = & \,
    \frac{\kappa(\nu_1^2-\rho^2)(\nu_3^2-\rho^2)}{2\rho^2 (\nu_1^2 -
             \nu_3^2)} 
             \Big( \cosh( \nu_3(T-t)) - \cosh(\nu_1(T-t)) \Big),  \label{eq:S42} \\
    S_{4,3}(T-t) = & \,
    \frac{\gamma\kappa}{\nu_1^2 -
             \nu_3^2} 
             \Big(  \nu_1 \sinh( \nu_1(T-t)) - \nu_3 \sinh(\nu_3(T-t)) \Big. \nonumber \\
    & \hspace{45pt} \Big. + \rho \cosh( \nu_1(T-t)) - \rho \cosh(\nu_3(T-t))\Big), \label{eq:S43} \\
    S_{4,4}(T-t) = & \,
    \frac{1}{2\rho^2(\nu_1^2 -
             \nu_3^2)} 
             \Big( \big(\nu_1^2-\rho^2\big) \big(\nu_3^2+\rho^2\big) \cosh(\nu_3(T-t)) \Big. \nonumber \\
             & \hspace{76pt} - \big(\nu_1^2 + \rho^2\big) \big(\nu_3^2-\rho^2\big) \cosh(\nu_1(T-t)) \nonumber \\
    & \hspace{76pt} \Big. + 2 \rho \nu_3 \big( \nu_1^2 - \rho^2 \big) \sinh( \nu_3(T-t)) \nonumber
    \\ & \hspace{76pt} \Big. - 2\rho\nu_1 \big( \nu_3^2 - \rho^2 \big) \sinh( \nu_1(T-t))  \Big) \label{eq:S44}
\end{align}
for all $t \in [0,T]$. In addition, slightly more involved but still elementary computations reveal that the functions $G(t) = (G_i(t))_{1\leq 1 \leq 4} = (\varrho/\lambda, -\kappa/(2\lambda),-1,0) S(t)$ introduced in~\eqref{def:G} are given by
\begin{align} 
    G_{1}(T-t) & = \frac{1}{2\lambda \rho^2 (\nu_1^2 -
             \nu_3^2)} \label{eq:G1} \\
    & \hspace{17pt} \Big( \big( 2 \varrho \nu_3^2 (\nu_1^2 - \rho^2) + \gamma \kappa \nu_1^2 \nu_3^2 \big) \cosh( \nu_1(T-t)) \Big. \nonumber \\
    & \hspace{22pt} + \big( 2\lambda \nu_1\nu_3^2 (\nu_1^2 - \rho^2)  - \gamma\kappa\rho \nu_1\nu_3^2 \big) \sinh(\nu_1(T-t)) \nonumber \\
    & \hspace{22pt} - \big( 2 \varrho \nu_1^2 (\nu_3^2 - \rho^2) + \gamma \kappa \nu_1^2 \nu_3^2 \big) \cosh( \nu_3(T-t)) \nonumber \\
    & \hspace{22pt} \Big. - \big( 2\lambda \nu_1^2\nu_3 (\nu_3^2 - \rho^2) - \gamma\kappa\rho \nu_1^2\nu_3 \big) \sinh(\nu_3(T-t))
    \Big), \nonumber \\
    G_{2}(T-t) & = \frac{1}{4 \gamma \lambda \rho^2 \nu_1\nu_3 (\nu_1^2 -
             \nu_3^2)} \label{eq:G2} \\
    & \hspace{12pt} \Big( \nu_1\nu_3 (\nu^2_3 - \rho^2) \big( 2 (\nu_1^2 - \rho^2) (\varrho - \lambda\rho) + \gamma\kappa (\nu_1^2 + \rho^2) \big) \cosh( \nu_1(T-t)) \Big. \nonumber \\
    & \hspace{16pt} - \nu_3 (\nu^2_3 - \rho^2) \big( 2  (\nu_1^2 - \rho^2) (\varrho\rho - \lambda \nu_1^2) + 2 \gamma\kappa\rho\nu_1^2 \big) \sinh(\nu_1(T-t)) \nonumber \\
    & \hspace{16pt} - \nu_1\nu_3 (\nu^2_1 - \rho^2) \big( 2 (\nu_3^2 - \rho^2) (\varrho - \lambda\rho) + \gamma\kappa (\nu_3^2 + \rho^2) \big) \cosh(\nu_3(T-t)) \nonumber \\
    & \hspace{16pt} \Big. + \nu_1 (\nu^2_1 - \rho^2) \big( 2  (\nu_3^2 - \rho^2) (\varrho\rho - \lambda \nu_3^2) + 2 \gamma\kappa\rho\nu_3^2 \big) \sinh(\nu_3(T-t))
    \Big),  \nonumber\\
    G_{3}(T-t) & = \frac{1}{2\lambda\nu_1\nu_3 (\nu_1^2 -
             \nu_3^2)} \label{eq:G3} \\
    & \hspace{12pt} \Big( \nu_1\nu_3 \big( \gamma\kappa\rho  - 2\lambda (\nu_1^2 - \rho^2)\big) \cosh( \nu_1(T-t)) \Big. \nonumber \\
    & \hspace{16pt} - \nu_3 \big(\gamma\kappa\nu_1^2 +2\varrho (\nu_1^2-\rho^2) \big) \sinh(\nu_1(T-t)) \nonumber \\
    & \hspace{16pt} - \nu_1\nu_3 \big( \gamma\kappa\rho - 2 \lambda (\nu_3^2-\rho^2) \big) \cosh( \nu_3(T-t)) \nonumber \\
    & \hspace{16pt} \Big. + \nu_1 \big( \gamma\kappa\nu_3^2 + 2\varrho (\nu_3^2-\rho^2)\big) \sinh(\nu_3(T-t))
    \Big), \nonumber \\
    G_{4}(T-t) & = \frac{(\nu_1^2-\rho^2)(\nu_3^2-\rho^2)}{4\gamma\kappa \lambda\rho^2\nu_1\nu_3 (\nu_1^2 -
             \nu_3^2)} \label{eq:G4} \\
    & \hspace{12pt} \Big( \nu_1\nu_3  \big( 2 \varrho + \gamma\kappa + 2 \lambda \rho  \big) \cosh(\nu_1(T-t)) \Big. \nonumber \\
    & \hspace{16pt} + 2\nu_3 \big( \rho\varrho + \lambda \nu_1^2 \big) \sinh(\nu_1(T-t)) \nonumber \\
    & \hspace{16pt} - \nu_1\nu_3 \big( 2\varrho  + \gamma\kappa + 2\lambda\rho \big) \cosh(\nu_3(T-t)) \nonumber \nonumber \\
    & \hspace{16pt} \Big. - 2\nu_1 \big( \rho\varrho + \lambda \nu_3^2 \big) \sinh(\nu_3(T-t))
    \Big) \nonumber
\end{align}
for all $t \in [0,T]$. Finally, let us collect some useful properties of the eigenvalues $\nu_1$ and $\nu_3$ and the functions $S_{4,j}(\cdot), G_{j}(\cdot)$ for all $j\in \{1,\ldots,4\}$.

\begin{lemma} \label{lem:aux1}
For any positive constants $\lambda,\gamma, \kappa, \rho, \phi$ we have 
\be \label{nu-cond} 
\nu_1^2 \leq \rho^2  \leq \nu_3^2
\ee
with $\nu_1$ and $\nu_3$ given in~\eqref{def:eigenvalues}.
\end{lemma}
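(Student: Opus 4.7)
The plan is to reduce both inequalities to a single algebraic identity. Recall from~\eqref{def:eigenvalues} that
\[
\nu_1^2 = \tfrac{c_1-c_2}{2}, \qquad \nu_3^2 = \tfrac{c_1+c_2}{2},
\]
so the two claimed inequalities $\nu_1^2 \le \rho^2$ and $\rho^2 \le \nu_3^2$ are equivalent, respectively, to $c_1 - 2\rho^2 \le c_2$ and $2\rho^2 - c_1 \le c_2$. Since $c_2 > 0$ by its definition in~\eqref{def:c1c2}, both inequalities follow at once from
\begin{equation*}
c_2^2 \; \ge \; (c_1 - 2\rho^2)^2.
\end{equation*}

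Next I would verify this last inequality by direct computation. Using the shorthand $a \triangleq \lambda\rho^2$, $b \triangleq \rho\kappa\gamma$, $c \triangleq \phi$, one has $\theta - 2\phi = a+b-c$ and $\theta - 2\lambda\rho^2 = -a+b+c$, so that
\begin{equation*}
\lambda^2 c_2^2 = (a+b-c)^2 + 4bc, \qquad \lambda^2 (c_1 - 2\rho^2)^2 = (-a+b+c)^2.
\end{equation*}
Expanding both squares, the cross-terms combine and one obtains the clean identity
\begin{equation*}
\lambda^2 \bigl( c_2^2 - (c_1-2\rho^2)^2 \bigr) = (a+b-c)^2 + 4bc - (-a+b+c)^2 = 4ab = 4\lambda\rho^3\kappa\gamma,
\end{equation*}
which is non-negative (in fact strictly positive) under the standing positivity of all parameters. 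This establishes~\eqref{nu-cond}.

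There is no real obstacle here; the only thing to be careful about is to handle both inequalities simultaneously by passing through the absolute value $|c_1 - 2\rho^2|$, so that the argument does not need to split into cases according to the sign of $c_1 - 2\rho^2$. The single identity $\lambda^2 c_2^2 - \lambda^2(c_1-2\rho^2)^2 = 4\lambda\rho^3\kappa\gamma$ then settles both bounds in one stroke.
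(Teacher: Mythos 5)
Your proof is correct and rests on exactly the same computation as the paper's: the paper verifies $\nu_1^2-\rho^2<0$ and $\nu_3^2-\rho^2>0$ separately, each time splitting on the sign of $\phi+\gamma\kappa\rho-\lambda\rho^2$ and reducing the nontrivial case to the positivity of $4\lambda\rho^3\kappa\gamma$, which is precisely the quantity appearing in your identity $\lambda^2\bigl(c_2^2-(c_1-2\rho^2)^2\bigr)=4\lambda\rho^3\kappa\gamma$. Your packaging via $c_2\ge|c_1-2\rho^2|$ merely absorbs the paper's four sign cases into a single squared inequality, which is a tidy but inessential streamlining.
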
 

\begin{proof}
First, from (\ref{def:c1c2}) and (\ref{def:eigenvalues}) we get 
\bn
\nu_1^2 - \rho^2 = \frac{\phi + \gamma \kappa \rho - \lambda \rho^2 - \sqrt{
 4 \gamma \kappa \phi \rho + (\phi - 
    \rho (\gamma \kappa + \lambda \rho))^2}}{2 \lambda}
\en
and we argue that
\be \label{ineq1}
\nu_1^2 - \rho^2 < 0
\ee
by considering following two cases: if $\phi + \gamma \kappa \rho - \lambda \rho^2 \leq 0$, then~\eqref{ineq1} holds trivially. 
Otherwise, if $\phi + \gamma \kappa \rho - \lambda \rho^2 > 0$, then~\eqref{ineq1} is equivalent to $-4\lambda\rho^3\kappa\gamma < 0$, which is satisfied. Next, again due to~\eqref{def:c1c2} and~\eqref{def:eigenvalues} we have 
$$
\nu_3^2 - \rho^2 = \frac{\phi + \gamma \kappa \rho - \lambda \rho^2 + \sqrt{
 4 \gamma \kappa \phi \rho + (\phi - 
    \rho (\gamma \kappa + \lambda \rho))^2}}{2 \lambda},
$$
and we obtain similarly that
\be \label{ineq2}
\nu_3^2 - \rho^2>0. 
\ee
Indeed, if $\phi + \gamma \kappa \rho - \lambda \rho^2 \geq0$, then (\ref{ineq2}) holds trivially. Otherwise, if $\phi + \gamma \kappa \rho - \lambda \rho^2 <0$, then (\ref{ineq2}) is again equivalent to $-4\lambda\rho^3\kappa\gamma < 0$, which is satisfied. Finally,~\eqref{ineq1} and~\eqref{ineq2} imply~\eqref{nu-cond}.
\end{proof}

\begin{lemma} \label{lemma-g-s} 
For any positive constants $\lambda,\gamma, \kappa, \rho, \varrho, \phi, T$ we have 
\begin{itemize}
\item[(i)]
\begin{equation*}
\sup_{0\leq t\leq T}|S_{4,j}(T-t)|<\infty \quad \textrm{and} \quad \sup_{0\leq t\leq T}|G_{j}(T-t)|<\infty \quad (j \in \{ 1,\ldots,4 \}),
\end{equation*}
\item[(ii)]
\begin{equation*}
1 \leq \inf_{0\leq t \leq T} S_{4,4}(T-t) \leq  \sup_{0\leq t \leq T}S_{4,4}(T-t) <\infty,
\end{equation*}
\item[(iii)]
\begin{equation*}
- \infty< \inf_{0\leq t \leq T} G_{3}(T-t) \leq  \sup_{0\leq t \leq T}G_{3}(T-t) <-1.
\end{equation*}
\end{itemize}
\end{lemma}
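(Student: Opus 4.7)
The plan is to handle all three parts uniformly through the spectral decomposition $S(t) = U e^{Dt} U^{-1}$ from Section~\ref{subsec:matrixexponential}, combined with the sign information $\nu_1^2 < \rho^2 < \nu_3^2$ provided by Lemma~\ref{lem:aux1}. Part (i) is immediate: in formulas~\eqref{eq:S41}--\eqref{eq:S44} and~\eqref{eq:G1}--\eqref{eq:G4} every function is a finite linear combination of $\cosh(\nu_j(T-t))$ and $\sinh(\nu_j(T-t))$ with real arguments $\nu_j \in \re$, hence continuous in $t$ and therefore uniformly bounded on the compact interval $[0,T]$.

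For part (ii), the key idea is to write
\begin{equation*}
S_{4,4}(s) = \sum_{i=1}^{4} a_i e^{\mu_i s}, \qquad (\mu_1, \mu_2, \mu_3, \mu_4) = (\nu_1,-\nu_1,\nu_3,-\nu_3),
\end{equation*}
where $a_i = U_{4,i}(U^{-1})_{i,4} = (U^{-1})_{i,4}$ because the last row of $U$ in~\eqref{def:DU} is $(1,1,1,1)$. The explicit fourth column of $U^{-1}$ then yields, using $\nu_1^2 - \rho^2 < 0 < \nu_3^2 - \rho^2$ and $\nu_1^2 - \nu_3^2 < 0$ from Lemma~\ref{lem:aux1}, that every $a_i > 0$. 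Moreover $\sum_i a_i = S_{4,4}(0) = 1$ by $S(0)=I$, so the $a_i$ are convex weights. Jensen's inequality for the convex function $\exp$ gives
\begin{equation*}
S_{4,4}(s) = \sum_{i=1}^{4} a_i e^{\mu_i s} \geq e^{s \sum_i a_i \mu_i} = e^{s\, S'_{4,4}(0)} = e^{s\, L_{4,4}} = e^{\rho s} \geq 1,
\end{equation*}
since $S'(0)=L$ and $L_{4,4}=\rho$. Specializing to $s=T-t\in[0,T]$ gives the lower bound, and the upper bound is continuity on the compact interval.

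Part (iii) is treated by the same device applied to $G(t) = w\, S(t)$ with $w = (\varrho/\lambda,\, -\kappa/(2\lambda),\, -1,\, 0)$, writing $G_3(s) = \sum_i b_i e^{\mu_i s}$ with $b_i = (wU)_i (U^{-1})_{i,3}$. From $S(0)=I$ one has $\sum_i b_i = G_3(0) = -1$; the main calculation is to check that every $b_i$ is negative, using the signs $\nu_1+\rho > 0$, $\nu_1-\rho < 0$ and $\nu_3 \pm \rho < 0$ (which follow from $\nu_1, \nu_3 < 0$ and Lemma~\ref{lem:aux1}). Once this is done, $(-b_i)_i$ is a probability vector, and Jensen yields
\begin{equation*}
-G_3(s) \geq e^{s\,(-G'_3(0))} = e^{s(\varrho/\lambda + \kappa\gamma/(2\lambda))} \geq 1,
\end{equation*}
since $G'_3(0) = (wL)_3 = -\varrho/\lambda - \kappa\gamma/(2\lambda)$. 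Thus $G_3(s) \leq -1$ for all $s \geq 0$, and combined with continuity this gives the stated bounds (strict on the right for $s>0$, with equality at $s=0$). The main obstacle is the sign check on the $b_i$: unlike the $a_i$ in (ii), which are each a product of three quantities directly governed by Lemma~\ref{lem:aux1}, the $b_i$ mix all three nontrivial components of $w$ through the row $wU$, so one has to manipulate each entry until it appears as a product of factors of manifest sign.
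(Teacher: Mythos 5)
Parts (i) and (ii) of your argument are sound. For (ii) you take a genuinely different route from the paper: the paper shows directly that $t\mapsto S_{4,4}(T-t)$ is decreasing by differentiating the four hyperbolic terms in~\eqref{eq:S44} one by one, whereas you exhibit $S_{4,4}(s)=\sum_i a_i e^{\mu_i s}$ as a convex combination of exponentials (indeed $a_i=(U^{-1})_{i,4}>0$ follows from $\nu_1^2<\rho^2<\nu_3^2$, and $\sum_i a_i=S_{4,4}(0)=1$) and invoke Jensen. This is cleaner and even yields the stronger bound $S_{4,4}(s)\ge e^{\rho s}$.

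Part (iii), however, has a genuine gap: the sign claim on which the whole argument rests is false. Consider $b_1=(wU)_1\,(U^{-1})_{1,3}$. A direct computation with the first column of $U$ in~\eqref{def:DU} gives
\begin{equation*}
(wU)_1=\frac{\varrho}{\lambda}\cdot\frac{\rho-\nu_1}{\kappa\gamma\nu_1}-\frac{\kappa}{2\lambda}\cdot\frac{\nu_1-\rho}{\kappa(\nu_1+\rho)}-\frac{\nu_1-\rho}{\kappa\gamma}
=(\rho-\nu_1)\left(\frac{\varrho}{\lambda\kappa\gamma\nu_1}+\frac{1}{2\lambda(\nu_1+\rho)}+\frac{1}{\kappa\gamma}\right).
\end{equation*}
Since $\nu_1\in(-\rho,0)$ does not depend on $\varrho$ (cf.~\eqref{def:c1c2}--\eqref{def:eigenvalues}), the first term in the bracket is negative and dominates once $\varrho$ is large, so $(wU)_1<0$ beyond a threshold in $\varrho$; meanwhile $(U^{-1})_{1,3}=2\gamma\kappa\rho^2(\nu_1+\rho)/\bigl(4\rho^2(\nu_1^2-\nu_3^2)\bigr)<0$. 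Hence $b_1>0$. This already occurs for the paper's own illustration parameters ($\lambda=0.5$, $\gamma=\kappa=\rho=1$, $\phi=0.1$, $\varrho=10$ give $\nu_1\approx-0.25$ and a bracket of roughly $-79+2.3<0$), so $(-b_i)_i$ is not a probability vector and the Jensen step cannot be applied. The conclusion $G_3(s)\le-1$ is nonetheless true, but it needs a preliminary reduction: the paper first uses the favourable signs of the $\varrho$-dependent terms to pass to $G_3(\cdot\,;0)$, then discards two further terms of favourable sign to reach $\wt G_3$ in~\eqref{g-3-t}, and only for that reduced function does a monotonicity (or convex-weight) argument go through. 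You would need to supply a comparable reduction before invoking Jensen. (A minor further point, shared with the paper's own proof: since $G_3(0)=-1$, the supremum in (iii) is attained and equals $-1$ at $t=T$, so the strict inequality should be read as non-strict.)
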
 

\begin{proof} 
\emph{(i):} This follows directly from the explicit expressions in~\eqref{eq:S41}--\eqref{eq:G4}.

\emph{(ii):} It suffices to show that the continuously differentiable mapping $t \mapsto S_{4,4}(T-t)$ in~\eqref{eq:S44} is decreasing on $[0,T]$ with $S_{4,4}(0)=1$. To achieve this, it is convenient to introduce
\begin{equation}
\begin{aligned}
I_1(t) \triangleq & \, (\nu_1^2-\rho^2) (\nu_3^2 + \rho^2) \cosh(\nu_3(T-t)), \\
I_2(t) \triangleq & \, - (\nu_1^2 + \rho^2) (\nu_3^2-\rho^2) \cosh(\nu_1(T-t)), \\
I_3(t) \triangleq & \, 2\rho\nu_3 (\nu_1^2-\rho^2) \sinh( \nu_3(T-t)), \\
I_4(t) \triangleq & \, - 2\rho\nu_1 (\nu_3^2-\rho^2)  \sinh( \nu_1(T-t))
\end{aligned}
\end{equation}
for all $t \in [0,T]$ and to rewrite $S_{4,4}$ in~\eqref{eq:S44} as
\begin{equation} 
S_{4,4}(T-t) = \frac{1}{2\rho^2(\nu_1^2 -
             \nu_3^2)}\big(I_1(t) + I_2(t) + I_3(t) + I_4(t)\big). 
\end{equation}
We claim that 
\begin{equation} \label{eq:proof:lemma-g-s:claim}
\frac{d}{dt} S_{4,4}(T-t) = \frac{1}{2\rho^2(\nu_1^2 -
             \nu_3^2)}\left( I_1'(t) + I_2'(t) + I_3'(t) + I_4'(t) \right) \leq 0 
\end{equation}
for all $t \in [0,T]$. First, since $c_2 > 0$ in~\eqref{def:c1c2} we have that $\nu_1^2 - \nu_3^2=-c_2<0$ and hence
\begin{equation} \label{eq:proof:lemma-g-s:nus1}
\frac{1}{2\rho^2(\nu_1^2 -
             \nu_3^2)} <0. 
\end{equation}
Moreover, by virtue of Lemma~\ref{lem:aux1}, together with the fact that $\nu_1, \nu_3 <0$ in~\eqref{def:eigenvalues}, which also implies $\sinh(\nu_1(T-t)) \leq 0$ and $\sinh(\nu_3(T-t)) \leq 0$ for all $t \in [0,T]$, it follows that 
\begin{equation*}
\begin{aligned}
I_1'(t) & = -\nu_3 (\nu_3^2+\rho^2)(\nu_1^2-\rho^2) \sinh(\nu_3(T-t)) \geq 0, \\ 
I_2'(t) & = \nu_1(\nu_1^2+\rho^2)(\nu_3^2-\rho^2) \sinh(\nu_1(T-t)) \geq 0, \\ 
I_3'(t) & = -2\rho \nu_3^2(\nu_1^2-\rho^2) \cosh(\nu_3(T-t))  \geq \; 0,  \\ 
I_4'(t) &=  2\rho \nu_1^2(\nu_3^2-\rho^2) \cosh(\nu_1(T-t))  \geq 0 \\ 
\end{aligned}
\end{equation*}
and thus our claim in~\eqref{eq:proof:lemma-g-s:claim}. Finally, observe that $S_{4,4}(0)=1$.

\emph{(iii):} First, we emphasize the dependence of $G_3(\cdot)$ in~\eqref{eq:G3} on $\varrho$ by writing
$$
\begin{aligned}
G_{3}(T-t;\varrho)  
    \triangleq & \, G_{3}(T-t) \\
    = & \, \frac{1}{2\lambda\nu_1\nu_3 (\nu_1^2 - \nu_3^2)} \\
    & \Big( \nu_1\nu_3 \big( \gamma\kappa\rho  - 2\lambda (\nu_1^2 - \rho^2)\big) \cosh( \nu_1(T-t)) \Big. \nonumber \\
    & \hspace{6pt} - \nu_3 \big(\gamma\kappa\nu_1^2 +2\varrho (\nu_1^2-\rho^2) \big) \sinh(\nu_1(T-t)) \nonumber \\
    & \hspace{6pt} - \nu_1\nu_3 \big( \gamma\kappa\rho - 2 \lambda (\nu_3^2-\rho^2) \big) \cosh( \nu_3(T-t)) \nonumber \\
    & \hspace{6pt} \Big. + \nu_1 \big( \gamma\kappa\nu_3^2 + 2\varrho (\nu_3^2-\rho^2)\big) \sinh(\nu_3(T-t))
    \Big) \qquad (0 \leq t \leq T).
\end{aligned}
$$
Note that similar to~\eqref{eq:proof:lemma-g-s:nus1} above $\nu_1, \nu_3 <0$ implies
\begin{equation} \label{eq:proof:lemma-g-s:nus2}
\frac{1}{2\lambda\nu_1\nu_3 (\nu_1^2 - \nu_3^2)} < 0. 
\end{equation}
Moreover, using Lemma~\ref{lem:aux1} and having in mind that $\sinh(\nu_1(T-t)) \leq 0$ and $\sinh(\nu_3(T-t)) \leq 0$ for all $t \in [0,T]$, one can easily check that
\be \label{g3-r}
G_{3}(T-t;\varrho) \leq G_{3}(T-t;0) \qquad (0\leq t\leq T).
\ee
Indeed, using the definition of $G_3$, inequality~\eqref{g3-r} is equivalent to
\be
\nu_1 (\nu_3^2-\rho^2) \sinh(\nu_3(T-t)) \geq \nu_3 (\nu_1^2-\rho^2)
\sinh(\nu_1(T-t)) \qquad (0\leq t\leq T),
\ee
which holds true because the left-hand side is non-negative and
the right-hand side is non-positive.

Next, introducing
\begin{align*}
K_1(t) \triangleq & \; \nu_1\nu_3 \big( \gamma\kappa\rho  - 2\lambda (\nu_1^2 - \rho^2)\big) \cosh( \nu_1(T-t)), \\
K_2(t) \triangleq & \; - \gamma\kappa\nu_3 \nu_1^2  \sinh(\nu_1(T-t)), \\
K_3(t)\triangleq & \; - \nu_1\nu_3 \big( \gamma\kappa\rho - 2 \lambda (\nu_3^2-\rho^2) \big) \cosh( \nu_3(T-t)), \\
K_4(t) \triangleq & \; \gamma\kappa \nu_1\nu_3^2  \sinh(\nu_3(T-t))
\end{align*}
for all $t \in [0,T]$ allows us to write
\begin{equation*}
G_{3}(T-t;0) = \frac{1}{2\lambda\nu_1\nu_3 (\nu_1^2 -
             \nu_3^2)}\big(K_1(t) + K_2(t) + K_3(T) + K_4(t)\big) \quad (0 \leq t \leq T).
\end{equation*} 
Let us also define 
\be \label{g-3-t}
\wt G_{3}(T-t) \triangleq \frac{1}{2\lambda\nu_1\nu_3 (\nu_1^2 - \nu_3^2)}\big(K_1(t)  + K_3(t) \big). 
\ee
Since $|\nu_1|<|\nu_3|$ in~\eqref{def:eigenvalues} we obtain
\begin{equation} \label{g-4-t}
K_2(t) + K_4(t) =  -\gamma\kappa\nu_3 \nu_1^2  \sinh(\nu_1(T-t)) + \gamma\kappa \nu_1\nu_3^2  \sinh( \nu_3(T-t)) \geq 0 
\end{equation}
for all $t \in [0,T]$. In fact, note that $|\nu_1|<|\nu_3|$ implies
\be
-\nu_3 \sinh(-\nu_3(T-t)) \geq -\nu_1 \sinh(-\nu_1(T-t)),
\ee
which is equivalent to~\eqref{g-4-t}. Together with~\eqref{eq:proof:lemma-g-s:nus2} we can therefore conclude that
\be \label{tg3-r}
G_{3}(T-t;\varrho)  \leq G_{3}(T-t;0) \leq \wt G_{3}(T-t) \qquad (0\leq t\leq T). 
\ee
We will now argue that the continuously differentiable mapping $t \mapsto \wt G_3(T-t)$ in~\eqref{g-3-t} is increasing on $[0,T]$. The bounds on $G_3$ in (iii) will then follow from (\ref{tg3-r}) together with $\wt G_{3}(0)=-1$. To this end, simply observe that
\begin{equation*}
\begin{aligned}
K_1'(t) & =  -\nu_1^2 \nu_3 (\gamma\kappa\rho-2\lambda(\nu_1^2-\rho^2))\sinh(\nu_1(T-t)) \leq 0,  \\ 
K_3'(t) & = \nu_1\nu_3^2 \big( \gamma\kappa\rho - 2 \lambda (\nu_3^2-\rho^2) \big) \sinh(\nu_3(T-t)) \leq 0,
\end{aligned}
\end{equation*}
because of Lemma~\ref{lem:aux1}, (\ref{def:eigenvalues}) and the fact
that
\be \label{tg4-r}
\gamma\kappa\rho - 2 \lambda (\nu_3^2-\rho^2) = -\phi + \lambda \rho^2
- \sqrt{4\gamma\kappa\phi\rho + (\phi -
  \rho(\gamma\kappa+\lambda\rho))^2} \leq 0,
\ee
which
holds true due to a similar reasoning as in the proof of
Lemma~\ref{lem:aux1} above. More precisely, 
if $-\phi + \lambda \rho^2 \leq 0$, then~\eqref{tg4-r} holds
trivially. Otherwise, if  $-\phi + \lambda \rho^2 > 0$,
then~\eqref{tg4-r} is equivalent to
$2\gamma\kappa\phi\rho+2\lambda\rho^3\gamma\kappa+\rho^2\gamma^2\kappa^2
>0$, which is satisfied. Consequently, recalling~\eqref{eq:proof:lemma-g-s:nus2} we obtain
in~\eqref{g-3-t} that
$$
\frac{d}{dt}\wt G_3(T-t) \geq 0 \quad (0\leq t\leq T) 
$$
as desired.
\end{proof} 




\end{document}